
\documentclass[letterpaper, 10 pt, conference]{IEEEtran}  

\IEEEoverridecommandlockouts                              





\usepackage{graphicx}
\usepackage{epsfig} 
\usepackage{epstopdf}
\usepackage{subfigure}
\usepackage{amsmath} 
\usepackage{amssymb}  
\usepackage[T1]{fontenc}
\usepackage{mathtools}
\usepackage{latexsym}

\usepackage{amsfonts}
\usepackage[german,english]{babel}
\usepackage{color}
\usepackage{todonotes}
\usepackage{soul}
\usepackage{multirow}

\usepackage{wrapfig}


\graphicspath{{./images/}}


\newenvironment{proof}[1]{\vspace{.1cm}\noindent{\sc Proof #1. }\hspace{0.05cm}\,\,}{$\hfill\Box$\vspace{.1cm}} 
\newtheorem{theorem}            {Theorem}[section] 
\newtheorem{definition}         [theorem]{Definition}

\newtheorem{lemma}              [theorem]{Lemma} 
\newtheorem{proposition}		[theorem]{Proposition} 
\newtheorem{corollary}		[theorem]{Corollary}
 
\newtheorem{example}		[theorem]{Example}

\newcommand{\dd}{{\rm d}\hbox{\hskip 0.5pt}}

\newcommand{\R}{\mathbb{R}}

\newcommand\disp{\displaystyle}

\newcommand\iintdl{\iint\displaylimits}


\title{\LARGE \bf
On the characterization of butterfly and multi-loop hysteresis behavior
}

\author{M.A. Vasquez-Beltran, B. Jayawardhana, R. Peletier%
    \thanks{*This paper is based on research developed in the DSSC Doctoral Training Programme, co-funded through a Marie Skłodowska-Curie COFUND (DSSC 754315).}
    \thanks{$^{1}$M.A. Vasquez Beltran and B. Jayawardhana are with the Engineering and Technology Institute Groningen, Faculty of Science and Engineering, University of Groningen, 9747AG Groningen, The Netherlands {\tt\small \{m.a.vasquez.beltran;b.jayawardhana\}@rug.nl}}%
    \thanks{$^{2}$R. Peletier is with the Kapteyn Astronomical Institute, Faculty of Science and Engineering, University of Groningen, 9747AG Groningen, The Netherlands {\tt\small r.peletier@rug.nl}}%
}

\begin{document}
    
\maketitle
\thispagestyle{empty}
\pagestyle{empty}


\begin{abstract}
While it is widely used to represent hysteresis phenomena with unidirectional-oriented loops, we study in this paper the use of Preisach operator for describing hysteresis behavior with multidirectional-oriented loops. This complex hysteresis behavior is commonly found in advanced materials, such as, shape-memory alloys or piezoelectric materials, that are used for high-precision sensor and actuator systems. We provide characterization of the Preisach operators exhibiting such input-output behaviors and we show the richness of the operators that are capable of producing intricate loops.
\end{abstract}


\section{Introduction}

The term “hysteresis” comes from the Greek “to lag behind” and was originally coined by Ewing in 1885 to describe a phenomenon occurring in the magnetization process of soft iron caused by reversal and cyclic changes of the input magnetic field. 
Currently, hysteresis is known to be present in several classes of physical systems such as ferroelectric and ferromagnetic materials, shape memory alloys, and mechanical systems with friction. Hysteresis represents a quasi-static dependence between the input and output of a system whose phase plot describes particular curves known as hysteresis loops \cite{Bernstein2007}.

Hysteresis is a non-linear phenomenon which has been represented in various different mathematical formulation. 
One major distinction in representing hysteresis is between 
the physics-based models and phenomenological models. The former focuses on describing the hysteresis phenomenon from the particular physical relations of the system under consideration, whereas the latter focuses on the empirical description of the input-output behavior. 
Due to the simplicity and ability to encapsulate many typical characteristics of hysteresis behavior, 
the phenomenological models have been widely studied for the past decades. Two of the phenomenological models that have been widely used are the Preisach operator \cite{Preisach1935, Mayergoyz1988}, whose formulation incorporates other operator-based models such as the Prandlt operator; and the Duhem model \cite{Visintin1994} whose formulation incorporates other models based on non-smooth integro-differential equations such as the Bouc-Wen model and the Dahl model.

In literature, there are numerous works that investigate the mathematical properties of these phenomenological models \cite{Macki1993, Gu2016, Brokate1996}. Subsequently, these characterization works are directly applicable for  
the stability analysis of and the control design for systems that consist of sub-systems exhibiting hysteresis behavior. For instance,
the construction of an approximate inverse model is pursued in \cite{Iyer2005} in order to stabilize a control system containing hysteretic elements. 
In recent years, it has been shown that these phenomenological hysteresis models can exhibit passivity/dissipativity property which is a typical property of physical systems. The dissipativity of Duhem model  
has been shown in \cite{Jayawardhana2012,Ouyang2013} while that of Preisach model is presented in \cite{Gorbet2001}. These dissipativity properties are closely related to the orientation of the hysteresis loops.

Despite these numerous endeavors, most of the works in hysteresis modeling have focused on the characterization of the hysteresis behavior whose phase plot describes single-oriented loop as illustrated in Fig. \ref{fig:loop_single}. Common examples of single-oriented loop occurs in the relation between polarization and electric fields of piezoelectric materials or the relation between magnetization and magnetic field of magnetostrictive materials. However, there exists another class of hysteresis behavior reported in literature (see, for instance, \cite{Waldmann2002,Sahota2004,Linnemann2009,Linhart2011}) whose phase plot describes two loops with opposite orientations and connected at an intersection point as depicted in Fig. \ref{fig:loop_butterfly}. From the resemblance to the wings of a butterfly, this behavior is known as butterfly hysteresis behavior. Examples of this type of hysteresis behavior occur in the relation between strain and electric field of piezoelectric materials and the relation between strain and magnetic field of magnetostrictive materials.

To the authors' best knowledge, there are two works providing mathematical analysis for the modeling of butterfly hysteresis behavior. Firstly, in \cite{Pozo2009} a modified Bouc–Wen is studied which can describe a particular class of asymmetric double hysteresis loop behavior by introducing position and/or acceleration information into the model equation. Moreover, when the parameters of the model satisfy particular conditions, it has been shown that passivity property holds \cite{Pozo2015}. Secondly, in \cite{Drincic2011}, a framework to transform butterfly loops to single-oriented loops is proposed. Although this approach facilitates the use of the well-studied hysteresis models and enables the possibility of implementing some of its known control strategies in systems including elements that exhibit hysteresis with butterfly loops, it relies on the existence of a convex mapping and restricts the loop shape to have exactly two minima with the same value.

In this work, we extend the results and include the proof of propositions in \cite{Jayawardhana2018} where we introduced a Preisach hysteresis operator capable of exhibiting butterfly loops. It is used to 
model the relation between strain and electric field of a particular piezoelectric material. 
We firstly present the analysis of a class of Preisach operators whose weighting function has one positive and one negative domain. We show that under mild assumptions over the distributions of these domains, the input-output behavior of Preisach operator can exhibits butterfly loops. Subsequently, we introduce a general class of Preisach operator whose weighting function can assume 
more than one positive and one negative domain. We show that the input-output behavior of these operators can exhibit hysteresis loops with two or more sub-loops. Finally, we present the stability analysis of a Lur'e system with multi-loop hysteresis element based on the results in \cite{Vasquez-Beltran2020}.

\begin{figure}
    \centering
    \includegraphics[width=0.4\linewidth]{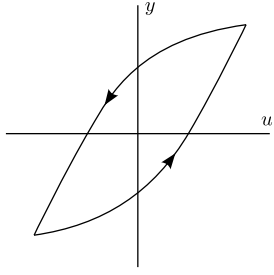}
    \caption{An illustration of a simple hysteresis loop $\mathcal H_{u,y}$ that typically describes the relation between polarization and electric fields of piezoelectric materials or the relation between magnetization and magnetic field of magnetostrictive materials. \label{fig:loop_single}}
\end{figure}

\begin{figure}
    \centering
    \includegraphics[width=0.4\linewidth]{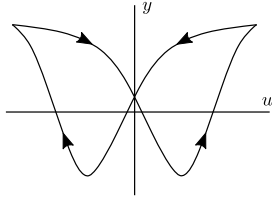}
    \caption{An illustration of a butterfly loop $\mathcal H_{u,y}$ that can describe the relation between strain and electric field of piezoelectric materials or the relation between strain and magnetic field in magnetostrictive materials. \label{fig:loop_butterfly}}
\end{figure}

This paper is organized as follows. In Section \ref{sec:preliminaries} we give some preliminaries that include the definition of hysteresis operator, operators with clockwise and counterclockwise input-output behavior, and the standard definition of the Preisach operator. In Section \ref{sec:preisach_butterfly_operator}, we present the Preisach butterfly operator and the proof of the main results in \cite{Jayawardhana2018}. In Section \ref{sec:preisach_multiloop_operator} we introduce a characterization of the self-intersections in a hysteresis loop and their relation to the weighting function of a Preisach multi-loop operator and Section \ref{sec:lure_multiloop} presents the absolute stability analysis of the Lur'e system with a Preisach multi-loop operator. Finally, in Section \ref{sec:conclusions} we give the conclusions.


\section{Preliminaries}\label{sec:preliminaries}

{\bf Notation.} We denote the spaces of piecewise continuous, absolute continuous and continuous differentiable functions $f:U\to Y$ by $C_{\text{pw}}(U,Y)$, $AC(U,Y)$ and $C(U,Y)$, respectively. 
A function $f:U\to Y$ is {\em monotonically increasing (resp. decreasing)} if for every $u_1,u_2\in U$ such that $u_1<u_2$ we have that $f(u_1)\leq f(u_2)$ (resp. $f(u_1)\geq f(u_2)$).

\subsection{Clockwise, counterclockwise and butterfly loops}

In order to define hysteresis operators (following the formulation in \cite{Logemann2003}), 
we introduce below three auxiliary concepts: {\em time-transformation}, {\em rate-independent operator} and {\em causal operator}.
\begin{definition}\label{def:time_transformation}
    A function $\phi:\R_+\to\R_+$ is called a {\em time transformation} if $\phi(t)$ is continuous and increasing with $\phi(0)=0$ and $\lim_{t\to \infty}\phi(t)=\infty$. $\hfill \triangle$
\end{definition}
\begin{definition}\label{def:rate_independent}
    An operator $\Phi$ is said to be {\em rate independent} if 
    \begin{equation*}
        \big(\Phi(u\circ \phi)\big)(t) = \Phi(u)\circ \phi(t)
    \end{equation*}
    holds for all $u\in AC(\R_+,\R)$, $t\in\R_+$ and all admissible time transformation $\phi$. $\hfill \triangle$
\end{definition}
\begin{definition}\label{def:causal}
    The operator $\Phi$ is said to be {\em causal} if for all $\tau>0$ and all $u_1,\,u_2\in AC(\R_+,\R)$ it holds that
    \begin{multline*}
        u_1(t)=u_2(t) \ \ \ \forall t\in [0,\tau] \\ 
        \Rightarrow \big(\Phi(u_1)\big)(t) = \big(\Phi(u_2)\big)(t) \ \ \ \forall t\in[0,\tau].
    \end{multline*} $\hfill \triangle$
\end{definition}\vspace{0.1cm}
Based on the definitions above, a hysteresis operator is defined formally as follows.

\begin{definition}\label{def:hysteresis_operator}
    An operator $\Phi$ is called a {\em hysteresis operator} if $\Phi$ is causal and rate-independent. $\hfill \triangle$
\end{definition}\vspace{0.1cm}

For the past decades, hysteresis operators have been widely studied and characterized (see, for instance, the exposition in  
\cite{Brokate1996,Mayergoyz2003,Visintin1994}). 
When a periodic input is applied to a hysteresis operator, the input-output phase plot will undergo a periodic closed orbit which is commonly referred to as  hysteresis loop. Similar to the periodic input-output map introduced in \cite[Definition 2.2]{JinHyoungOh2005} for Duhem models, we can define a {\em hysteresis loop} as follows.\\

\begin{definition}\label{def:hysteresis_loop}
    Consider a hysteresis operator $\Phi$ and an input-output pair $(u,y)$ with $y=\Phi(u)$. Let $u$ be periodic with a period of $T>0$, with one maximum $u_{\max}\in\R$ and with one minimum $u_{\min}\in\R$ in its periodic interval. Assume that there exists a constant $t_p\geq0$ such that $y$ is periodic in the interval $[t_p,\infty)$. 
    The periodic orbit given by $\mathcal{H}_{u,y} = \left\{ (u(t),y(t))\ |\ t\in[t_p,\infty) \right\}$ is called a {\em hysteresis loop} if there exists a $\upsilon \in \R$ such that
    \[
    \text{card}\big(\{(\upsilon,\gamma)\in \mathcal{H}_{u,y}\ |\ \gamma\in\R\}\big)=2,
    \]
    where card denotes the cardinality of a set. $\hfill \triangle$
\end{definition}\vspace{0.1cm}

In other words, the hysteresis loop as defined above means that the curve defined by $\mathcal H_{u,y}$ can have at most two  elements $(\upsilon,\gamma_1)$ and $(\upsilon,\gamma_2)$ for any admissible point $\upsilon$. This definition admits also multiple input-output loops that will be studied further in this paper.

As shown in \cite{Ouyang2013,Jayawardhana2012,Valadkhan2009,Padthe2005}, the input-output behavior of hysteresis operators can be classified by the type of hysteresis loops they produce. Simple hysteresis loops
can have a clockwise or counterclockwise orientation which is given in terms of the signed-area enclosed by its phase plot.
Following from Green's theorem, the signed-area enclosed by an input-output pair $(u,y)$ that forms a closed curve in an interval $[t_1, t_2]$ is given by
\begin{equation}\label{eq:closed_area}
    \mathbb{A}:=\frac{1}{2}\int_{t_1}^{t_2}\big(u(\tau)\dot y(\tau) -y(\tau)\dot u(\tau) \big) \dd \tau.
\end{equation}
Hence, generalizing this notion we can say that a hysteresis loop $\mathcal{H}_{u,y}$ is clockwise (resp. counterclockwise) if its signed-area $\mathbb{A}$ given by \eqref{eq:closed_area} with $t_1\geq t_p$ and $t_2=t_1+T$  satisfies $\mathbb{A}<0$ (resp. $\mathbb{A}>0$). In a similar manner, we say that a hysteresis operator $\Phi$ exhibits clockwise (resp. counterclockwise) input-output behavior if there exists at least one hysteresis loop $\mathcal{H}_{u,y}$ corresponding to an input-output pair $(u,y)$ with $y=\Phi(u)$ which is clockwise (resp. counterclockwise).

Based on these concepts, we can study  hysteresis operators $\Phi$ that give rise to butterfly loops using the enclosed signed-area as in \eqref{eq:closed_area} of the resulting hysteresis loops. However, as depicted in Fig. \ref{fig:loop_butterfly}, we would like to note that the so-called butterfly loops are composed of two sub-loops connected by a self intersection point where one of loops is clockwise and the other is  counterclockwise. Thus, the total signed-area of the butterfly loop could be either positive or negative depending on the difference between the individual signed-area of each sub-loop. For this reason, we define a {\em butterfly hysteresis operator} as follows.
\begin{definition}\label{def:butterfly_operator}
    A hysteresis operator $\Phi$ is called a {\em butterfly hysteresis operator} if there exists a hysteresis loop $\mathcal{H}_{u,y}$ with $y=\Phi(u)$ 
    such that $\mathbb{A} = 0$, where $\mathbb{A}$ is defined as in \eqref{eq:closed_area} with $t_1\geq t_p$ and $t_2=t_1+T$. $\hfill \triangle$
\end{definition}\vspace{0.2cm}
We remark that this definition of {\em butterfly hysteresis operator} is equivalent to the one that we have introduced in \cite[Definition 2.2]{Jayawardhana2018}. In the present paper we have used the concept of {\em hysteresis loop} in Definition \ref{def:hysteresis_loop} for defining the butterfly hysteresis operator with a particular input-output pair $(u,y)$ that forms a periodic orbit, e.g. on the interval $[t_1,t_1+T]$.


\subsection{The clockwise and counterclockwise relay operator}

One of the simplest hysteresis operators is the relay operator which we introduce below in its counterclockwise and clockwise versions. We define the counterclockwise relay operator $\mathcal R^{\circlearrowleft}_{\alpha,\beta}: AC(\R_+,\R)\times \{-1, 1\} \to C_{\text{pw}}(\R_+,\R)$ with switching parameters $\alpha>\beta$ and initial condition $r_0$ by 
\begin{equation}\label{eq:relay_operator_ccw}
    \big(\mathcal R^{\circlearrowleft}_{\alpha,\beta}(u,r_0)\big)(t) :=
    \left\{ \begin{array}{ll}
        +1 & \text{if } u(t)>\alpha, \\
        -1 & \text{if } u(t)<\beta, \\
        \big(\mathcal R^{\circlearrowleft}_{\alpha,\beta}(u,r_0)\big)(t_-) & 
        \begin{aligned}
            \text{if } &\beta\leq u(t) \leq\alpha, \\ 
            &\quad\text{and } t>0,
        \end{aligned} \\
        r_0 & \begin{aligned}
            \text{if } &\beta\leq u(t) \leq\alpha, \\
            &\quad\text{and } t=0.
        \end{aligned}
    \end{array}\right.
\end{equation}

\begin{figure}
    \centering
    \includegraphics[width=0.6\linewidth]{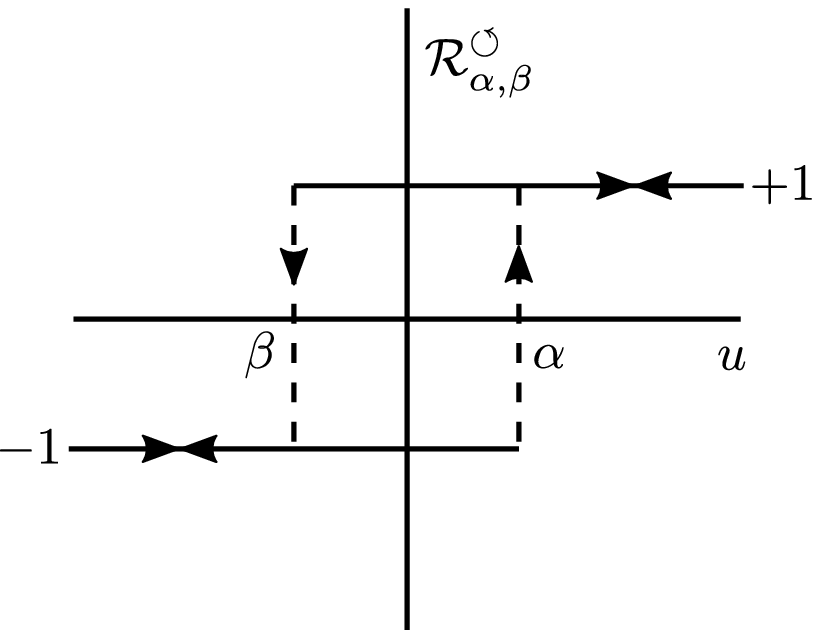}
    \caption{Input-output phase plot of the counterclockwise relay operator $\mathcal{R}^\circlearrowleft_{\alpha,\beta}$ as defined in \eqref{eq:relay_operator_ccw} \label{fig:ccw_relay}}
\end{figure}

Similarly, we define the clockwise relay operator $\mathcal R^{\circlearrowright}_{\alpha,\beta}: AC(\R_+,\R)\times \{-1, 1\} \to C_{\text{pw}}(\R_+,\R)$ with switching parameters $\alpha>\beta$ and initial condition $r_0$ by
\begin{equation}\label{eq:relay_operator_cw}
    \big(\mathcal R^{\circlearrowright}_{\alpha,\beta}(u,r_0)\big)(t) :=
    \left\{ \begin{array}{ll}
        -1 & \text{if } u(t)>\alpha, \\
        +1 & \text{if } u(t)<\beta, \\
        \big(\mathcal R^{\circlearrowright}_{\alpha,\beta}(u,r_0)\big)(t_-) & 
        \begin{aligned}
            \text{if } &\beta\leq u(t) \leq\alpha, \\ 
            &\quad\text{and } t>0,
        \end{aligned} \\
        -r_0 & \begin{aligned}
            \text{if } &\beta\leq u(t) \leq\alpha, \\
            &\quad\text{and } t=0.
        \end{aligned}
    \end{array}\right.
\end{equation}
We note that for a specified initial condition $r_0\in\{-1,+1\}$ both relay operators are hysteresis operators in the sense of Definition \ref{def:hysteresis_operator} with the form
\begin{equation*}
    \begin{aligned}
        \Phi(u) = \mathcal{R}_{\alpha,\beta}^{\circlearrowleft}(u,\,r_0) 
        \qquad\text{and}\qquad
        \Phi(u) = \mathcal{R}_{\alpha,\beta}^{\circlearrowright}(u,\,r_0).
    \end{aligned}
\end{equation*}
The input-output phase plot of both relay operators is illustrated in Figs. \ref{fig:ccw_relay} and \ref{fig:cw_relay}. It can be observed from definitions \eqref{eq:relay_operator_ccw} and \eqref{eq:relay_operator_cw} that for equal initial condition $r_0$, the output of a counterclockwise relay is equivalent to the negative output of a clockwise operator for every input, i.e. $\mathcal R^{\circlearrowleft}_{\alpha,\beta} = - \mathcal R^{\circlearrowright}_{\alpha,\beta}$.

\begin{figure}
    \centering
    \includegraphics[width=0.6\linewidth]{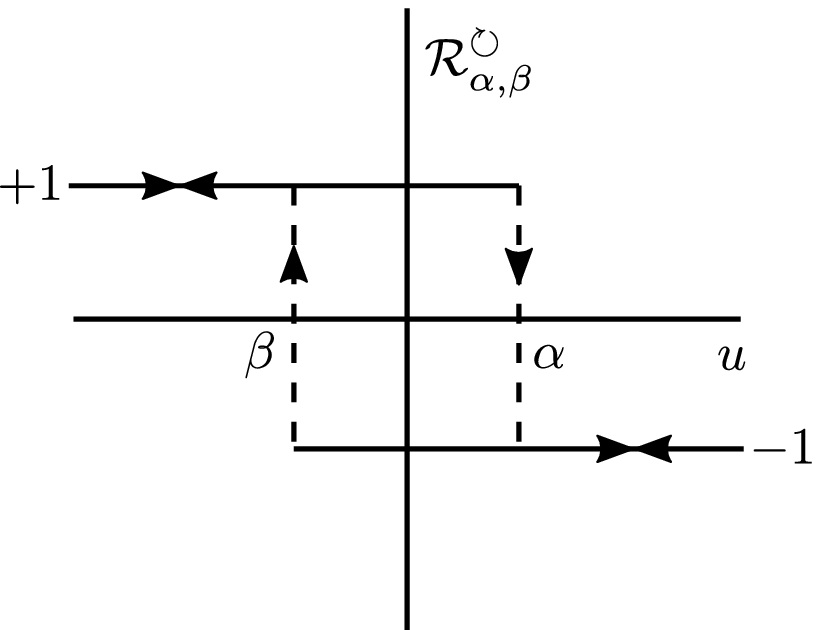}
    \caption{Input-output phase plot of the clockwise relay operator $\mathcal{R}^\circlearrowright_{\alpha,\beta}$ as defined in \eqref{eq:relay_operator_cw} \label{fig:cw_relay}}
\end{figure}

\subsection{Preisach Operator}

The Preisach operator is, roughly speaking, the weighted integral of all infinitesimal (counterclockwise) relay operators, also known as hysterons, whose switching values satisfy $\alpha>\beta$. To provide a formal definition of this, we need to introduce two concepts. Firstly, we denote by $P$ the admissible plane of relay operators defined by $P:=\{(\alpha,\beta)\in\R^2\ |\ \alpha>\beta\}$ and it is commonly referred to as the Preisach plane. 
Secondly, we denote $\mathcal I$ the set of interfaces where each interface $L\in\mathcal{I}$ is a monotonically decreasing staircase curve parameterized in the form $L = \{ \sigma(\gamma)\in P\ |\ \gamma\in\R_+ \}$ by a function $\sigma(\gamma)\in C(\R_+,P)$ which satisfies $\lim_{\gamma\to \infty} \lVert\sigma(\gamma)\rVert=\infty$ and $\sigma(0)=(\alpha,\alpha)$ for some $\alpha \in \R$. By monotonically decreasing $L$ we mean that for every pair $(\alpha_1,\beta_1),(\alpha_2,\beta_2)\in L$ we have that $\beta_1\leq\beta_2$ implies $\alpha_1\geq\alpha_2$. Based on these concepts, the Preisach operator $\mathcal{P}:AC(\R_+,\R)\times \mathcal{I}\to AC(\R_+,\R)$ is formally expressed by 
\begin{equation}\label{eq:preisach_operator}
    \begin{aligned}
        &\big( \mathcal{P}(u,L_0) \big)(t) := \\
        &\quad\iintdl_{(\alpha,\beta)\in P}
        \mu(\alpha,\beta)\ 
        \big( 
        \mathcal R^{\circlearrowleft}_{\alpha,\beta} 
        ( u, r_{\alpha,\beta}(L_0) )
        \big) (t) \
        \dd{\alpha} \dd\beta
    \end{aligned}
\end{equation}
where $\mu\in C_{\text{pw}}(P,\R)$ is a weighting function, $L_0 \in \mathcal{I}$ is the initial interface and $r_{\alpha,\beta}:\mathcal{I}\to\{-1,+1\}$ is an auxiliary function that determines the initial condition of every relay $\mathcal R^{\circlearrowleft}_{\alpha,\beta}$ according to its position $(\alpha,\beta)\in P$ with respects to the initial interface $L_0$ and is defined by
\begin{equation*}
    \begin{aligned}
        &r_{\alpha,\beta}(L_0) := \\
        &\quad\left\{ \begin{array}{ll}
            +1 & \begin{aligned}    
                \text{if }L_0 \cap \left\{(\alpha_1,\beta_1)\in P\ |\ \alpha\leq\alpha_1,\, \beta\leq\beta_1\right\} \neq\emptyset,
            \end{aligned} \\
            -1 & \text{otherwise}.
        \end{array}
        \right.
    \end{aligned}
\end{equation*}
In other words, the value of the function $r_{\alpha,\beta}$ will be $+1$ if the point $(\alpha,\beta)\in P$ is above the initial interface $L_0$, and will be $-1$ if the point $(\alpha,\beta)\in P$ is below the initial interface $L_0$.
It is important to note from \eqref{eq:relay_operator_ccw} that the actual initial state of a relay operator $\mathcal R^{\circlearrowleft}_{\alpha,\beta}$ is determined by $r_0$ only when $\beta\leq u(0)\leq\alpha$. This can produce an inconsistency between the values of the function $r_{\alpha,\beta}(L_0)$ and the actual initial state of relay operators $\big( \mathcal R^{\circlearrowleft}_{\alpha,\beta} ( u,r_{\alpha,\beta}(L_0) ) \big) (0)$ with $\alpha<u(0)$ or $\beta>u(0)$. Therefore, for well-posedness, we assume in general that the initial interface $L_0$ in the Preisach operator \eqref{eq:preisach_operator} satisfies  $\big(u(0),u(0)\big)\in L_0$.
As with the relay operator, the Preisach operator is a hysteresis operator in the sense of Definition \ref{def:hysteresis_operator} for specified initial conditions $L_0\in\mathcal{I}$ in the form $\Phi(u) = \mathcal{P}(u,L_0)$.

As one of the most important hysteresis operators, the dynamic behavior and geometric interpretation of the Preisach operator defined by \eqref{eq:preisach_operator} has been studied well in literature. 
Fundamentally, it can be said that the output of the Preisach operator is determined instantaneously with the variations of the input as all the relays in $P$ react instantaneously and simultaneously to the applied input $u$. For this reason, the initial interface $L_0$ evolves continuously and at every time instance $t\geq 0$ there exists an interface $L_t \in \mathcal{I}$ that divides the Preisach plane into two subdomains $P^+_t$ and $P^-_t$ where all relays $\mathcal{R}_{\alpha,\beta}$ with $({\alpha,\beta})\in P^+_t$ are in state $+1$ while all relays $\mathcal{R}_{\alpha,\beta}$ with $({\alpha,\beta})\in P^-_t$ are in state $-1$.

\section{Preisach Butterfly Operator}\label{sec:preisach_butterfly_operator}

The classical definition of the Preisach operator \cite{Preisach1935, Mayergoyz1988} considers that the weighting function $\mu$ is positive with counterclockwise relays as in \eqref{eq:relay_operator_ccw}. This assumption restricts all realizable hysteresis loops to be counterclockwise (for any periodic input-output pairs). On the other hands, when negative weighting function $\mu$ is assumed (with the same counterclockwise relays) then 
the realizable hysteresis loops become clockwise (see \cite{Visone2015}). This is mainly due to the fact that 
$\mathcal R^{\circlearrowleft}_{\alpha,\beta} = - \mathcal R^{\circlearrowright}_{\alpha,\beta}$, in which case, a Preisach operator with negative weighting function $\mu$ and counterclockwise relays is equivalent to a Preisach operator with a positive weighting function $\mu$ and clockwise relays. 

Inspired by the latter observation and considering that a {\em butterfly hysteresis operator} must exhibit both clockwise and counterclockwise input-output behavior, it is intuitive to think that a Preisach operator with both clockwise and counterclockwise relays and a sign-definite weighting function, or equivalently with only clockwise or only counterclockwise relays and whose weighting function $\mu$ has positive and negative domains, can, under certain conditions, exhibit a {\em butterfly hysteresis operator}. 
We formalize this idea introducing a Preisach operator whose weighting function $\mu$ has particular structure and whose input-output behavior exhibits a butterfly loop that satisfies the zero signed-area condition of Definition \ref{def:butterfly_operator}. For this purpose, let us introduce the following lemma that allows us to compute the enclosed area of the hysteresis loop of a single relay operator.
\begin{lemma}\label{lemma:relay_area}
    Consider the counterclockwise relay operator $\mathcal R^{\circlearrowleft}_{\alpha,\beta}$ as in \eqref{eq:relay_operator_ccw} with $\alpha>\beta$. For every periodic signal $u$ with a period of $T$ and with one maximum $u_{\max}\in\R$ and one minimum $u_{\min}\in\R$ in the periodic interval, the signed-area $\mathbb A$ corresponding to the input-output pair $(u,y)$ with $y=\mathcal R^{\circlearrowright}_{\alpha,\beta}(u,r_0)$ and $r_0\in\{-1,+1\}$ is given by
    \vspace*{-0.2cm}
    \begin{equation*}
        \mathbb A = \left\{\begin{array}{ll} 2(\alpha-\beta) & \text{if }u_{\min}<\beta \text{ and }u_{\max}>\alpha \\
        0 & \text{otherwise.}\end{array}\right.
    \end{equation*}
\end{lemma}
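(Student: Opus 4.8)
The plan is to exploit that the relay output $y$ is piecewise constant, so the signed-area integral \eqref{eq:closed_area} reduces to contributions localized at the switching instants. I would work on a period $[t_1,t_1+T]$ on which $(u,y)$ is already periodic, so that the orbit closes. Since $u$ is absolutely continuous and $y$ is piecewise constant (hence of bounded variation), an integration by parts---whose boundary term vanishes by periodicity---turns \eqref{eq:closed_area} into the equivalent form
\[
\mathbb{A} = -\int_{t_1}^{t_1+T} y(\tau)\,\dot u(\tau)\,\dd\tau .
\]
The advantage is that the integrand is now an ordinary bounded, piecewise-continuous function: no distributional derivative $\dot y$ appears, and the switching instants enter only as endpoints of the sub-intervals on which $y$ is constant.

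Next I would use the unimodality hypothesis (one maximum $u_{\max}$, one minimum $u_{\min}$ per period) to enumerate the switches. In the case $u_{\min}<\beta$ and $u_{\max}>\alpha$ the input crosses both thresholds, so according to \eqref{eq:relay_operator_ccw} the output switches exactly twice per period: to $+1$ as $u$ rises through $\alpha$, and back to $-1$ as $u$ falls through $\beta$. Splitting $-\int y\,\dot u\,\dd\tau$ at these two instants, on the rising branch $y=-1$ for $u\in(u_{\min},\alpha)$ and $y=+1$ for $u\in(\alpha,u_{\max})$, while on the falling branch $y=+1$ for $u\in(\beta,u_{\max})$ and $y=-1$ for $u\in(u_{\min},\beta)$. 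Evaluating the four elementary pieces, the terms involving $u_{\min}$ and $u_{\max}$ cancel and one is left with $\mathbb{A}=2(\alpha-\beta)$. Geometrically this is transparent: the orbit is the rectangle $[\beta,\alpha]\times\{-1,+1\}$ traversed once, of width $\alpha-\beta$ and height $2$.

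For the complementary cases I would show that, on a steady-state period, $y$ is constant, so the orbit degenerates to a horizontal segment of zero area. If $u_{\max}\leq\alpha$ and $u_{\min}\geq\beta$, the input never leaves $[\beta,\alpha]$ and the relay holds its value for all time. If exactly one threshold is crossed---either $u_{\max}>\alpha$ with $u_{\min}\geq\beta$, or $u_{\min}<\beta$ with $u_{\max}\leq\alpha$---then after the first crossing the relay is latched in a single state ($+1$ or $-1$ respectively), since the opposite threshold that would reverse it is never reached. In every such subcase $\dot y\equiv 0$ on $[t_1,t_1+T]$, whence $\mathbb{A}=0$, matching the ``otherwise'' branch.

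The step demanding the most care is the treatment of the jump discontinuities of $y$, since the original integrand $u\dot y - y\dot u$ in \eqref{eq:closed_area} contains $\dot y$, which is only a measure supported at the switching times. Working throughout with the equivalent form $\mathbb{A}=-\int y\,\dot u\,\dd\tau$ removes this difficulty entirely, reducing everything to integrals of constants against $\dot u$ over the smooth sub-arcs. The only remaining point to verify is that for $t_1\geq t_p$ the pair $(u,y)$ is genuinely periodic, so that the orbit closes and the boundary term in the integration by parts vanishes; this holds because once $u$ has completed one full up--down cycle the relay state is determined solely by its most recent threshold crossing and therefore repeats with period $T$.
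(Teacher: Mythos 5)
Your proposal is correct and follows essentially the same route as the paper: the initial condition washes out after one period, the orbit is a rectangle of width $\alpha-\beta$ and height $2$ traversed counterclockwise exactly when both thresholds are crossed, and degenerates to a zero-area line otherwise. The only difference is presentational --- the paper reads off the area $2(\alpha-\beta)$ geometrically from the rectangle, while you verify it by integrating $-\int y\,\dot u\,\dd\tau$ over the four monotone sub-arcs, which is a welcome way of handling the distributional $\dot y$ rigorously.
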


\begin{proof}{}
    It can be checked that the influence of relay initial condition to the output signal $y=\mathcal R^{\circlearrowright}_{\alpha,\beta}(u,r_0)$ disappears after one period. In other words, the pair $(u,y)$ will form a 
    hysteresis loop $\mathcal{H}_{u,y}$
    or a line in the time interval $[t_1,\infty)$ with $t_1\geq t_p=T$. When the range of the input covers both switching points of the relay, i.e., $u_{\min} <\beta$ and $u_{\max} > \alpha$, 
    then the relay will switch periodically forming a
    hysteresis loop $\mathcal{H}_{u,y}$ 
    as in Fig. \ref{fig:ccw_relay}. Otherwise, it will be a line. If it is a 
    hysteresis loop $\mathcal{H}_{u,y}$ 
    then the signed-area that is produced 
    is given by the area of the corresponding rectangle in the phase plot of $\{(u(\tau),y(\tau))\ | \ \tau\in [t_1,t_1+T]\}$ which is equal to $2(\alpha-\beta)$. If it is a line then the signed-area is equal to zero. 
\end{proof}

An immediate consequence of Lemma \ref{lemma:relay_area}, is that the signed-area enclosed by the hysteresis loop of a clockwise relay operator $\mathcal{R}_{\alpha,\beta}^\circlearrowright$ defined by \eqref{eq:relay_operator_cw} is given by $-2(\alpha-\beta)$ when $u_{\min}<\beta$ and $u_{\max}>\alpha$.

We recall now the following proposition from \cite{Jayawardhana2018} which considers a particular class of Preisach operator with two-sided weighting function $\mu$ whose input-ouput behavior can exhibit butterfly loops. By two-sided we mean that there exists a simple curve $B$ that divides the Preisach domain $P$ into two disjoint subdomains $B_+$ and $B_-$ such that $P=B_+\cup B_-\cup B$ and where $\mu(\alpha,\beta)\geq 0$ for every $(\alpha,\beta)\in B_+$ and $\mu(\alpha,\beta)\leq 0$ for every $(\alpha,\beta)\in B_-$.

\begin{proposition}\label{prop:butterfly_preisach}
    ({\em \cite[Proposition 3.1]{Jayawardhana2018}}) Consider a Preisach operator $\mathcal{P}$ as in \eqref{eq:preisach_operator} with $\mu$ be a two-sided weighting function. 
    Suppose that the first order lower and upper partial moments of $\mu$ satisfy 
    \begin{align}
        \label{eq:first_moment_beta}\int_r^\infty \mu(\alpha,\beta) \beta \dd \beta & = \infty \\
        \label{eq:first_moment_alpha}\int_{-\infty}^r \mu(\alpha,\beta) \alpha \dd \alpha & = \infty,
    \end{align}
    for all $(\alpha,\beta)\in P$. Assume that the boundary curve $B$ is monotonically decreasing. Then $\mathcal{P}$ is a {\em butterfly hysteresis operator}.
\end{proposition}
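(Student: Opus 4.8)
The plan is to reduce the signed-area $\mathbb{A}$ of a Preisach hysteresis loop to an integral of the per-relay areas furnished by Lemma~\ref{lemma:relay_area}, and then to locate, by an intermediate-value argument, a periodic input whose loop has $\mathbb{A}=0$. First I would restrict attention to periodic inputs $u$ with a single maximum $u_{\max}$ and single minimum $u_{\min}$ per period, as in Definition~\ref{def:hysteresis_loop}, and parametrize them by the pair $(u_{\min},u_{\max})$ with $u_{\min}<u_{\max}$. By Lemma~\ref{lemma:relay_area} a counterclockwise relay $\mathcal{R}^{\circlearrowleft}_{\alpha,\beta}$ contributes a closed sub-loop of signed-area $2(\alpha-\beta)$ precisely when $u_{\min}<\beta$ and $u_{\max}>\alpha$, so the set of active relays is the triangle $T(u_{\min},u_{\max})=\{(\alpha,\beta)\in P : u_{\min}<\beta,\ \alpha<u_{\max}\}$. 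Since the Preisach output \eqref{eq:preisach_operator} is a $\mu$-weighted superposition of relays and the signed-area functional \eqref{eq:closed_area} is linear, this yields
\[
\mathbb{A}(u_{\min},u_{\max}) = 2\iintdl_{T(u_{\min},u_{\max})}\mu(\alpha,\beta)\,(\alpha-\beta)\,\dd\alpha\,\dd\beta.
\]
Because $\mu\in C_{\text{pw}}(P,\R)$ and $T$ is bounded for finite parameters, this integral is finite and depends continuously on $(u_{\min},u_{\max})$.

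Next I would exploit the two-sided structure of $\mu$ together with the monotone boundary $B$. On $B_+$ the integrand $\mu(\alpha-\beta)$ is nonnegative and on $B_-$ it is nonpositive, since $\alpha-\beta>0$ throughout $P$. Monotonicity of $B$ guarantees that as $(u_{\min},u_{\max})$ vary the triangle $T$ sweeps across $B$ in a controlled manner, so one may select parameters for which $T$ lies essentially in $B_+$, making $\mathbb{A}>0$ (a counterclockwise loop), and other parameters for which the portion of $T$ in $B_-$ dominates, making $\mathbb{A}<0$ (a clockwise loop). The divergence conditions \eqref{eq:first_moment_beta} and \eqref{eq:first_moment_alpha} make these dominations quantitative: they force the positive and negative weighted contributions each to grow without bound as the triangle is enlarged in the $\beta$- and $\alpha$-directions respectively, so that both signs of $\mathbb{A}$ are genuinely attained rather than being blocked by a bounded competing contribution.

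Having produced one parameter choice with $\mathbb{A}>0$ and another with $\mathbb{A}<0$, I would join them by a continuous path in the parameter set $\{(u_{\min},u_{\max}) : u_{\min}<u_{\max}\}$ and apply the intermediate value theorem to the continuous map $\mathbb{A}$, obtaining a pair $(u_{\min}^\star,u_{\max}^\star)$ with $\mathbb{A}=0$. It then remains to verify that the associated input-output pair $(u,y)$, $y=\mathcal{P}(u)$, forms a genuine hysteresis loop in the sense of Definition~\ref{def:hysteresis_loop} and not a degenerate line: by the wiping-out behaviour of the interface, the influence of $L_0$ is erased after one period so that $y$ is periodic on $[T,\infty)$, and the active triangle having nonempty interior (again ensured by the divergence conditions) guarantees the loop is nondegenerate. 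By Definition~\ref{def:butterfly_operator}, the existence of such a zero-area loop makes $\mathcal{P}$ a butterfly hysteresis operator. The main obstacle I anticipate is the middle step: rigorously certifying that both $\mathbb{A}>0$ and $\mathbb{A}<0$ are reachable, that is, turning the qualitative picture ``$B_+$ drives counterclockwise, $B_-$ drives clockwise'' into precise inequalities via the partial-moment conditions, while tracking how the monotone boundary $B$ partitions the moving triangle $T$.
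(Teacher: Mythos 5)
Your proposal is correct and follows essentially the paper's own route: use Lemma~\ref{lemma:relay_area} and linearity to write the loop's signed area as $\mathbb{A}=2\iint_{P_1}\mu(\alpha,\beta)(\alpha-\beta)\,\dd\alpha\,\dd\beta$ over the triangle of periodically switching relays, then exploit the monotone boundary $B$, the divergence of the partial moments, and continuity of $\mathbb{A}$ in the input extrema to produce a zero-area loop by the intermediate value theorem. The one point where you flag an obstacle --- certifying that both signs of $\mathbb{A}$ are actually reachable --- is exactly where the paper's setup is slightly sharper: instead of sweeping both extrema, it anchors the triangle at a point $(\alpha_1,\beta_1)\in B$ so that $\mathbb{A}=2(\mathbb{A}_+-\mathbb{A}_-)$ with both contributions finite, and then extends only $u_{\max}$ (resp.\ only $u_{\min}$) into the region where monotonicity of $B$ forces $\mu$ to be sign-definite, so the added area $h(\lambda)$ is continuous, vanishes at $\lambda=\alpha_1$, is strictly increasing, and diverges by \eqref{eq:first_moment_beta} (resp.\ \eqref{eq:first_moment_alpha}), reducing the argument to a clean one-variable intermediate value step.
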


\begin{proof}{}
    Let us take arbitrary $(\alpha_1,\beta_1)\in B$ with $\alpha_1>\beta_1$ (i.e., the point $(\alpha_1,\beta_1)$ is not on the boundary of $P$). Consider a subset of Preisach domain $P_1:=\{(\alpha,\beta)\in P \ |\ \alpha< \alpha_1,\, \beta> \beta_1 \}$ which is a solid triangle whose vertices are at $(\alpha_1,\beta_1)$, $(\alpha_1,\alpha_1)$ and $(\beta_1,\beta_1)$. 
    Note that since $B$ is monotonically decreasing, it separates $P$ in two polar regions where weighting function $\mu$ assigned to the domain above $B$ has different sign with that below $B$. Without loss of generality, we consider the case where $B_{1-}$ is below $B$ and $B_{1+}$ is above $B$. The arguments below are still valid when we consider the reverse case. 
    
    Due to the monotonicity of $B$, if we consider the extended area on left of $P_1$, which is given by $P_{1-}^{\text{ext}}:=\{(\alpha,\beta)\in P\ |\ \beta<\beta_1, \alpha<\alpha_1\}$, the weight $\mu$ in this area will have the same sign as that in $B_{1-}$. The same holds for the extended area above $P_1$ where the weight $\mu$ in $P_{1+}^{\text{ext}}:=\{(\alpha,\beta)\in P\ |\ \beta>\beta_1, \alpha>\alpha_1\}$ has the same sign as that in $B_{1+}$.
    
    Let us now analyze the input-output behavior when the input $u$ of the Preisach operator is a periodic signal with a period of $T$ and with one maximum $u_{\max}=\alpha_1$ and one minimum $u_{\min}=\beta_1$. 
    It is clear that for every $t\geq t_p=T$, the initial conditions of all relays in $P_1$ no longer the affect output $y$ and it becomes periodic.
    Therefore, the relays $\mathcal R^{\circlearrowleft}_{\alpha,\beta}$ whose states are switching periodically correspond to the domain $P_1$ while the state of all the relays in $P\backslash P_1$ remains the same as given by the initial condition. Consequently, following from Lemma \ref{lemma:relay_area}, the signed-area of the hysteresis loop $\mathcal{H}_{u,y}$ obtained from the input-output pair $(u,y)$ is given by
    \begin{align*}
        \mathbb A &=2\iint\displaylimits_{(\alpha,\beta)\in P_1}\mu(\alpha,\beta)\ \big(\alpha-\beta\big) \dd{\alpha} \dd\beta \\
        & = 2\underbrace{\iint\displaylimits_{(\alpha,\beta)\in B_{1+}}|\mu(\alpha,\beta)| \big(\alpha-\beta\big) \dd{\alpha} \dd\beta}_{\mathbb A_+} 
    \end{align*}
    \begin{align*}
        & \qquad - 2\underbrace{\iint\displaylimits_{(\alpha,\beta)\in B_{1-}}|\mu(\alpha,\beta)| \big(\alpha-\beta\big) \dd{\alpha} \dd\beta}_{\mathbb A_-} 
    \end{align*}
    When the variation of $\mu$ is such that $\mathbb A_+=\mathbb A_-$, we have obtained the condition for $\mathcal{P}$ to be a {\em butterfly hysteresis operator} where the chosen periodic input signal $u$ ensures that $\mathbb A=0$. However  
    since in general the variation in $\mu$ can be asymmetric, the signed-area $\mathbb A_+$ may not be equal to $\mathbb A_-$. 
    
    Let us consider the case when $\mathbb A_- > \mathbb A_+$ (i.e., the negative weight is dominant in $P_1$). In this case, we modify the periodic input signal $u$ such that its maximum $u_{\max}$ is parametrized by $\lambda > \alpha_1$. Similar as before, we have that the relays $\mathcal R^{\circlearrowleft}_{\alpha,\beta}$ whose states are switching periodically correspond to the domains $P_1$ and $P_{1+}^{\text{ext},\lambda}:=\{(\alpha,\beta)\in P\ |\ \beta>\beta_1, \lambda>\alpha>\alpha_1\}$ while the state of relays corresponding to $P\backslash (P_1\cup P_{1+}^{\text{ext},\lambda})$ remains the same as given by the initial condition. Hence, using again Lemma \ref{lemma:relay_area}, the signed-area of the hysteresis loop $\mathcal{H}_{u,y}$ corresponding to the input-output pair $(u,y)$ with modified $u$ is now given by 
    \begin{align*}
    \mathbb A &=2(\mathbb A_+ - \mathbb A_-) + 2\underbrace{\iint\displaylimits_{(\alpha,\beta)\in P_{1+}^{\text{ext},\lambda}}|\mu(\alpha,\beta)| \big(\alpha-\beta\big) \dd{\alpha} \dd\beta}_{h(\lambda)}. 
    \end{align*}
    Since $\mu$ is a piecewise continuous function, the function $h(\lambda)$ is also continuous function, $h(\alpha_1)=0$ and is strictly increasing (as $\mu>0$ in $P^{\text{ext}}_{1+}$). Due to the unboundedness of the first order upper partial moment of $\mu$ as in \eqref{eq:first_moment_beta}, it follows that $h(\lambda)\to\infty$ as $\lambda\to\infty$. This implies that there exists $\alpha_2>\alpha_1$ such that $h(\alpha_2)=\mathbb A_- - \mathbb A_+$. In this case, by taking a periodic signal with its maximum $u_{\max} = \alpha_2$ and its minimum $u_{\min}=\alpha_1$, the signed-area of the corresponding hysteresis loop $\mathcal{H}_{u,y}$ is equal to zero as claimed. 
    
    On the other hand, when $\mathbb A_+ > \mathbb A_-$, we can use {\it vis-a-vis} similar arguments as above where $u_{\min}$ is now parametrized by $\lambda < \beta_1$, instead of parameterizing $u_{\max}$ as before. For this situation, the additional relays that are affected by the modified input correspond to the domain $P_{1-}^{\text{ext},\lambda}:=\{(\alpha,\beta)\in P\ |\ \lambda<\beta<\beta_1, \alpha<\alpha_1\}$. The claim then follows similarly as above where the additional signed-area of the corresponding hysteresis loop $\mathcal{H}_{u,y}$ is a continuous function $h(\lambda)$ that is strictly increasing and approaches $-\infty$ as $\lambda \to -\infty$. 
    
    Finally, we can follow the same reasoning as above for the case when $B_{1-}$ is above $B$ and $B_{1+}$ is below $B$.
\end{proof}\vspace{0.2cm}

In Proposition \ref{prop:butterfly_preisach}, we consider a general case where $\mu$ can be any two-sided function, as long as, its decay to zero, which is measured by its upper and lower partial moments, is not too fast. If this condition is not satisfied, we may not be able to find an extended subset in $P$ parametrized by $\lambda$ (as used in the proof of Proposition \ref{prop:butterfly_preisach}) such that the total signed-area $\mathbb A$ of the hysteresis loop $\mathcal{H}_{u,y}$ with the modified $u$ is zero. Nevertheless, the conditions over the upper and lower partial moments of $\mu$ can be relaxed if we focus on a small region close to the meeting point of $B$ and the line $\{(\alpha,\beta) \ | \ \alpha = \beta \}$. This is the case for the class Preisach operator considered in the next proposition which was also introduced in \cite{Jayawardhana2018}.

\begin{proposition}\label{prop:butterfly_preisach_sym}
    ({\em \cite[Proposition 3.2]{Jayawardhana2018}}) Consider a Preisach operator $\Phi$ as in \eqref{eq:preisach_operator} with a two-sided weighting function $\mu$. Assume that the boundary curve of $\mu$ is given by $B=\{(\alpha,\beta)\in P \ | \ \alpha = -\beta + \kappa \}$ where $\kappa\in\R_+$ is an offset and $\mu$ is anti-symmetric with respect to $B$, i.e., $\mu(\alpha,\beta)=-\mu(-\alpha,-\beta)$ holds for all $(\alpha,\beta)\in P$. Then $\Phi$ is a butterfly hysteresis operator.
\end{proposition}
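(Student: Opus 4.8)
The plan is to reuse the signed-area formula obtained in the proof of Proposition~\ref{prop:butterfly_preisach} and to make the assumed anti-symmetry of $\mu$ do all the work by feeding the operator a periodic input whose switching region is symmetric about the boundary curve $B$. Recall from Lemma~\ref{lemma:relay_area} and the argument of Proposition~\ref{prop:butterfly_preisach} that, for a periodic input $u$ with a single maximum $u_{\max}=\alpha_1$ and a single minimum $u_{\min}=\beta_1$, only the relays lying in the solid triangle $P_1=\{(\alpha,\beta)\in P \mid \alpha<\alpha_1,\ \beta>\beta_1\}$, with vertices $(\alpha_1,\beta_1)$, $(\alpha_1,\alpha_1)$ and $(\beta_1,\beta_1)$, switch periodically, so that the signed area of the resulting loop is
\begin{equation*}
    \mathbb A = 2\iintdl_{(\alpha,\beta)\in P_1}\mu(\alpha,\beta)\,(\alpha-\beta)\,\dd\alpha\,\dd\beta.
\end{equation*}
Unlike Proposition~\ref{prop:butterfly_preisach}, here we will not need the partial-moment conditions; the whole task reduces to exhibiting a single admissible $(\alpha_1,\beta_1)$ for which this integral vanishes exactly.

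First I would choose the input so that the right-angle vertex of $P_1$ lies on $B$: take any $\alpha_1>\kappa/2$ and set $\beta_1=\kappa-\alpha_1$, so that $\alpha_1>\beta_1$ and $\alpha_1+\beta_1=\kappa$, i.e.\ $(\alpha_1,\beta_1)\in B$. The geometric heart of the argument is that $P_1$ is then symmetric about $B$. Indeed $P_1$ is a right-isosceles triangle whose equal legs, along $\alpha=\alpha_1$ and $\beta=\beta_1$, meet at $(\alpha_1,\beta_1)$; its axis of symmetry is the line joining that vertex to the midpoint $(\tfrac{\alpha_1+\beta_1}{2},\tfrac{\alpha_1+\beta_1}{2})=(\kappa/2,\kappa/2)$ of the hypotenuse, and since both $(\alpha_1,\beta_1)$ and $(\kappa/2,\kappa/2)$ lie on $B$, that axis is exactly $B$. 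Moreover, the reflection about $B$ is the involution $(\alpha,\beta)\mapsto(\kappa-\beta,\kappa-\alpha)$, which preserves the admissible plane $P$ and, by the symmetry just noted, maps $P_1$ onto itself.

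Next I would carry out this reflection as a change of variables in the area integral. Setting $(\alpha',\beta')=(\kappa-\beta,\kappa-\alpha)$, the map is an isometry so $|\det J|=1$, it sends $P_1$ to $P_1$, and it fixes the factor $\alpha-\beta$ because $(\kappa-\beta')-(\kappa-\alpha')=\alpha'-\beta'$. Reading ``anti-symmetric with respect to $B$'' as anti-symmetry under this reflection, $\mu(\kappa-\beta,\kappa-\alpha)=-\mu(\alpha,\beta)$, the transformed integrand is the exact negative of the original. Hence the integral equals minus itself and must vanish, giving $\mathbb A=0$; by Definition~\ref{def:butterfly_operator}, $\Phi$ is a butterfly hysteresis operator.

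The obstacle I expect is interpretational rather than computational. For the cancellation to close, the anti-symmetry must be taken with respect to the reflection about $B$, namely $(\alpha,\beta)\mapsto(\kappa-\beta,\kappa-\alpha)$, which is the unique involution here that both preserves $P$ and fixes $\alpha-\beta$ while mapping $P_1$ onto itself; I would state this form of the hypothesis explicitly at the start. The rest is routine and mirrors Proposition~\ref{prop:butterfly_preisach}: I would check that the influence of the initial interface vanishes after one period so that $(u,y)$ is periodic on $[T,\infty)$, and that the chosen non-degenerate input (with $\alpha_1>\beta_1$, so $P_1$ has positive area) produces a genuine hysteresis loop in the sense of Definition~\ref{def:hysteresis_loop} rather than a line.
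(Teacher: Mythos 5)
Your proof is correct and follows essentially the same route as the paper's own (much terser) argument: choose a periodic input whose switching triangle $P_1$ is a right isosceles triangle symmetric about $B$, and let the anti-symmetry of $\mu$ cancel the signed area computed via Lemma~\ref{lemma:relay_area}. Two details where you actually improve on the paper's sketch: you correctly read the anti-symmetry as invariance under the reflection $(\alpha,\beta)\mapsto(\kappa-\beta,\kappa-\alpha)$ about $B$ (the printed condition $\mu(\alpha,\beta)=-\mu(-\alpha,-\beta)$ maps $P$ outside itself and is vacuous as stated), and your choice $u_{\max}+u_{\min}=\kappa$ is the one consistent with $B=\{(\alpha,\beta)\in P \ |\ \alpha=-\beta+\kappa\}$, whereas the paper's proof takes $u_{\max}=-u_{\min}+2\kappa$, which would center the triangle on the line $\alpha+\beta=2\kappa$ rather than on $B$.
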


\begin{proof}{Proposition \ref{prop:butterfly_preisach_sym}} 
    The proof of the proposition follows similarly to the proof of Proposition \ref{prop:butterfly_preisach}. In this case, it suffices to have a periodic input signal $u$ whose maximum and minimum satisfy $u_{\max}=-u_{\min}+2\kappa$. The relays $\mathcal R_{\alpha,\beta}$ whose state switches periodically will lie in a subset of $P$ which has the form of isosceles and right triangle. Since the weighting function is anti-symmetric with respect to $B$ then the signed-area of the hysteresis loop $\mathcal{H}_{u,y}$ will be zero.   
\end{proof}

As a particular case of study, we analyze in the next example a class of Preisach butterfly operator in Proposition \ref{prop:butterfly_preisach_sym} with symmetrical two-sided weighting function.

\begin{example}\label{ex:preisach_butterfly}
    Let $B := \lbrace (\alpha,\beta)\in P\ |\ \alpha=-\beta \rbrace$ (with $\kappa=0$) 
    and consider a point $(-\beta_1,\beta_1)\in B$ such that $P_1 := \lbrace (\alpha,\beta)\in P\ |\ \alpha<\beta_1, \beta>-\beta_1 \rbrace$. In this case the subdomain of interest $P_1$ in the Preisach plane is an isosceles triangle with vertices in $(\alpha_1,\alpha_1)$, $(\alpha_1,-\alpha_1)$, and $(-\alpha_1,-\alpha_1)$. Let us define the weighting function by
    \begin{equation}\label{eq:example_butterfly_mu}
        \mu(\alpha,\beta) := \left\{ \begin{array}{ll} 
            -1 & \text{if } \alpha \leq -\beta ,\ (\alpha,\beta)\in P_1 \\
            1 & \text{if } \alpha > -\beta,\ (\alpha,\beta)\in P_1 \\
            0 & \text{otherwise }
        \end{array}\right.
    \end{equation}
    An illustration of the weighting function \eqref{eq:example_butterfly_mu} is included in Fig. \ref{fig:example_butterfly_mu}.
    \begin{figure}[ht]
        \centering
        \includegraphics[width=\columnwidth,trim={0 0 0 0},clip]{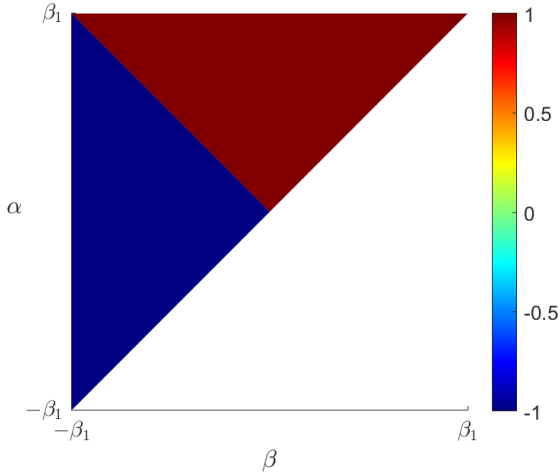}
        \caption{Weighting function $\mu(\alpha,\beta)$ defined by \eqref{eq:example_butterfly_mu}. \label{fig:example_butterfly_mu}}
    \end{figure}
    The Preisach operator with this weighting function clearly satisfies the conditions of Proposition \ref{prop:butterfly_preisach} and consequently it is a {\em butterfly hysteresis operator}. It follows from this proposition that we no longer need the extended areas $P_{1-}^{\text{ext}}$ and $P_{1+}^{\text{ext}}$. The subset of the Preisach domain $P_1$ is now subdivided in four disjoint regions defined by
    \begin{eqnarray*}
        P_{1}^{1} &:=& \lbrace (\alpha,\beta)\ |\ \alpha \leq 0,\ \beta \leq 0 \rbrace, \\
        P_{1}^{2} &:=& \lbrace (\alpha,\beta)\ |\ \alpha > 0,\ \beta \leq 0,\ \alpha \leq -\beta \rbrace, \\
        P_{1}^{3} &:=& \lbrace (\alpha,\beta)\ |\ \beta \leq 0,\ \alpha > -\beta \rbrace, \\
        P_{1}^{4} &:=& \lbrace (\alpha,\beta)\ |\ \alpha > 0,\ \beta > 0 \rbrace. \\
    \end{eqnarray*}
    \begin{figure}[ht]
        \centering
        \subfigure[Input signal]{\includegraphics[width=0.23\textwidth,trim={0 0 0 0.2cm},clip]{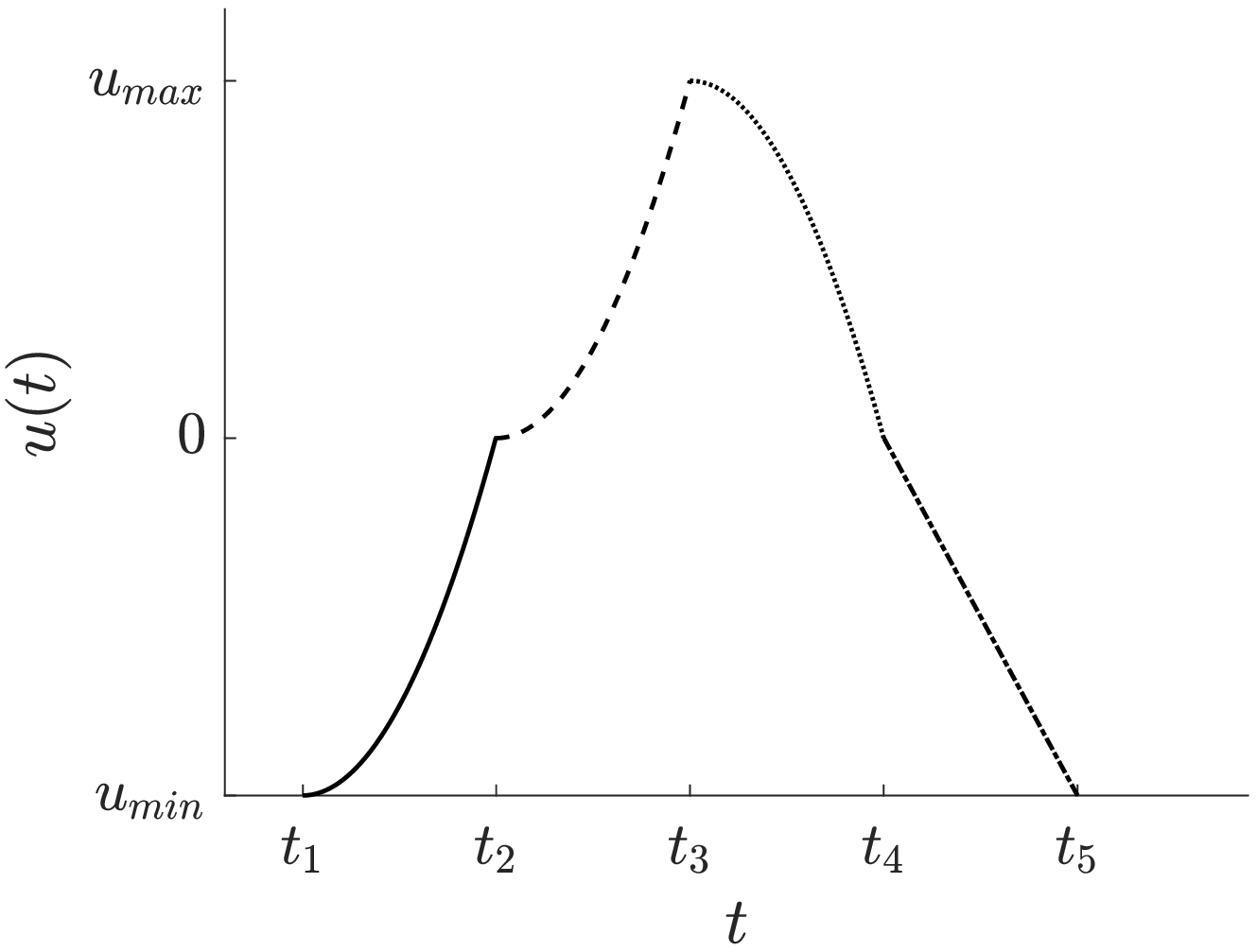}}
        \subfigure[Output signal]{\includegraphics[width=0.23\textwidth,trim={0 0 0 0.2cm},clip]{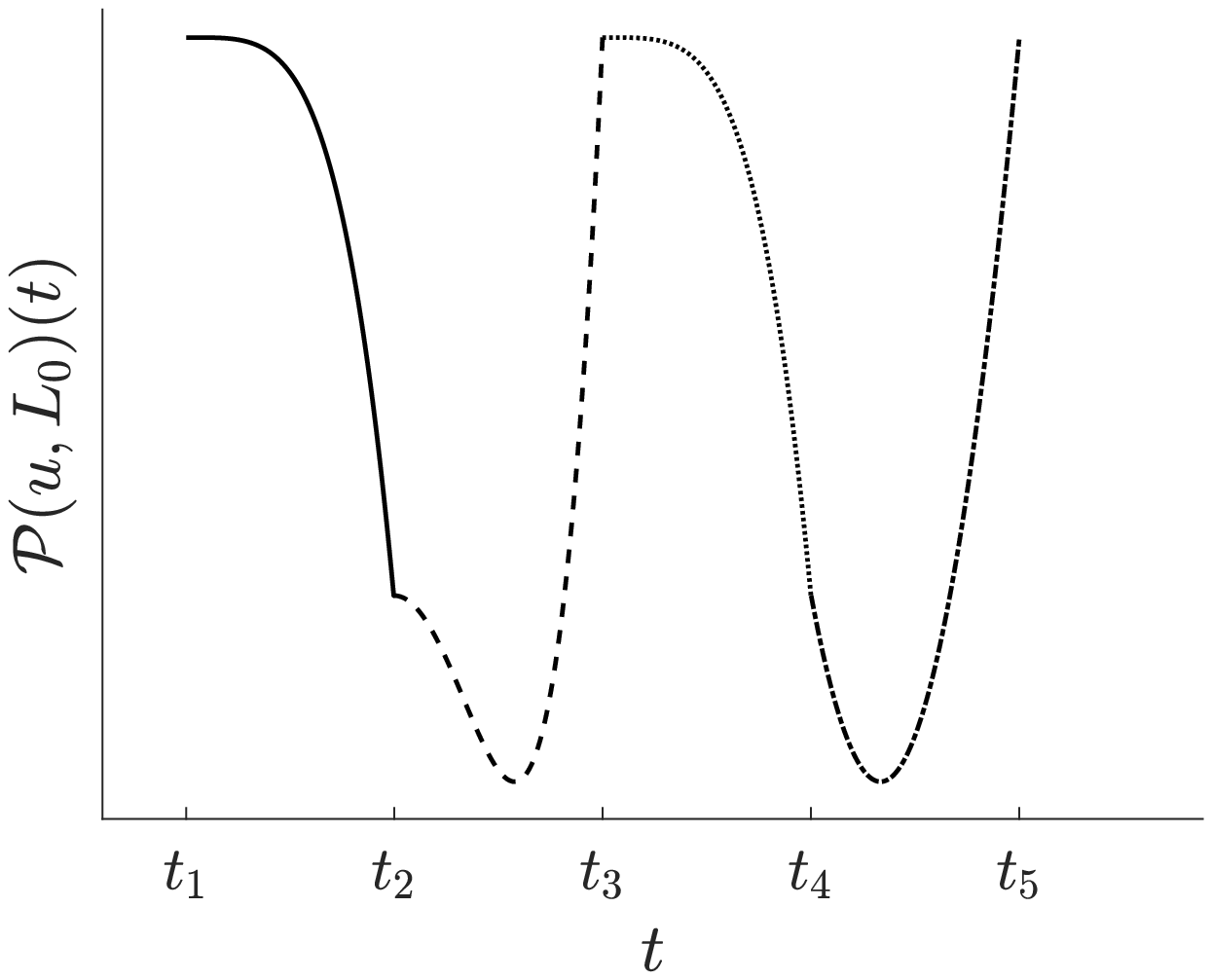}}
        \subfigure[Input-output phase plot]{\includegraphics[width=0.35\textwidth,trim={0 0 0 0.2cm},clip]{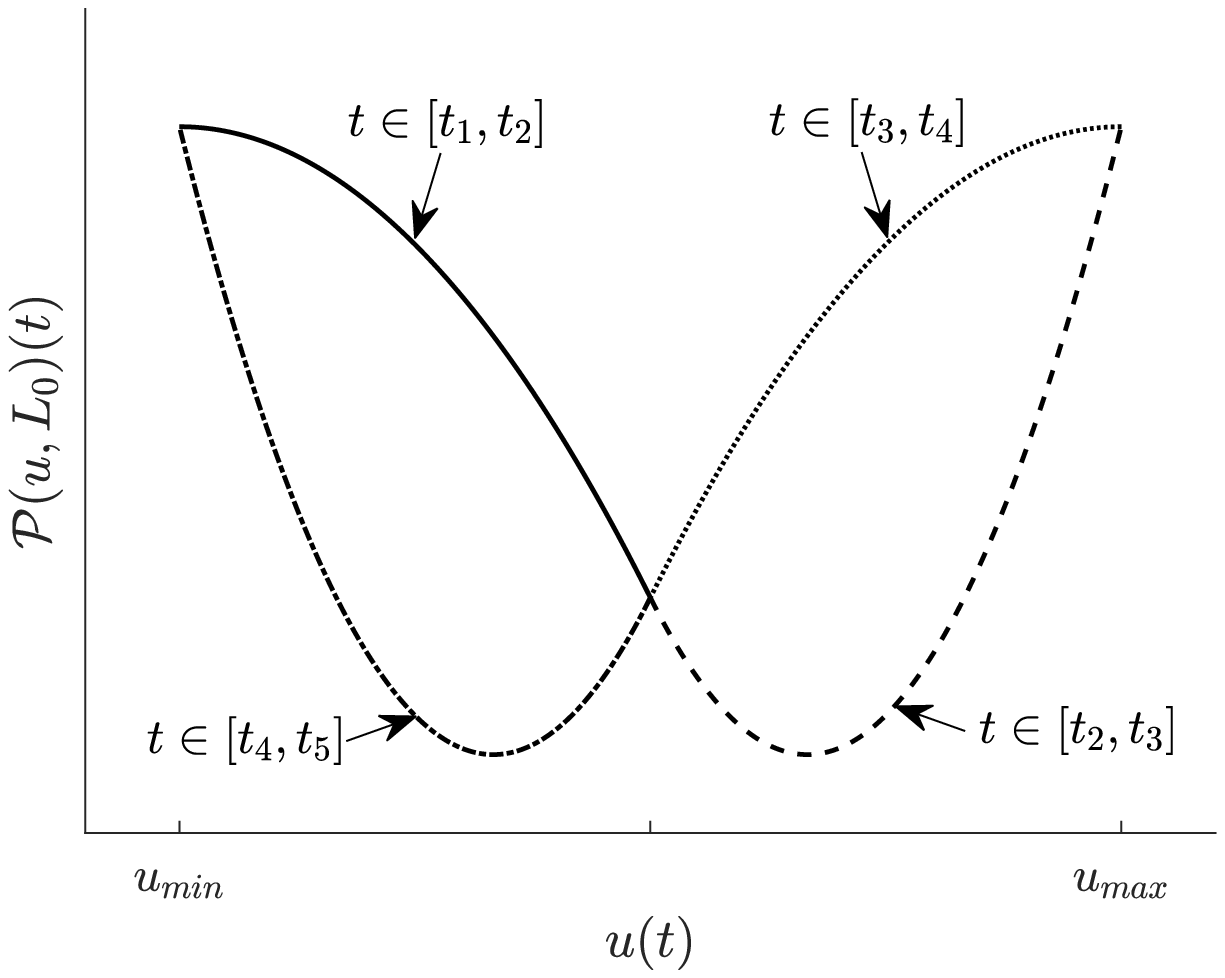}}
        \caption{Input-output phase plot using the Preisach butterfly hysteresis operator with symmetric two-sided weighting function. (a). The plot of input signal $u$ in a periodic time interval $[t_1,t_5]$. (b). The corresponding plot of output signal $y$. (c). The input-output phase plot of input and output signal which shows a symmetric butterfly loop and whose signed-area is equal to zero.}
        \label{fig:example_butterfly_phaseplot}
    \end{figure}
    Note that the output can be determined by the individual behavior of each region in the form
    \begin{equation}\label{eq:example_butterfly_y}
        \begin{aligned}
            \big(&\mathcal{P}(u,L_0)\big)(t):= \\
            &-\iintdl_{(\alpha,\beta)\in P_1^1}\ 
                \big(\mathcal
                R^{\circlearrowleft}_{\alpha,\beta}(u,r_{\alpha,\beta}(L_0)) \big)(t)\ \dd\alpha\dd\beta \\
            &-\iintdl_{(\alpha,\beta)\in P_1^2}\ 
                \big(\mathcal R^{\circlearrowleft}_{\alpha,\beta}(u,r_{\alpha,\beta}(L_0)) \big)(t)\ \dd\alpha\dd\beta \\
            &+\iintdl_{(\alpha,\beta)\in P_1^3}\ 
                \big(\mathcal R^{\circlearrowleft}_{\alpha,\beta}(u,r_{\alpha,\beta}(L_0)) \big)(t)\ \dd\alpha\dd\beta \\
            &+\iintdl_{(\alpha,\beta)\in P_1^4}\ 
                \big(\mathcal R^{\circlearrowleft}_{\alpha,\beta}(u,r_{\alpha,\beta}(L_0)) \big)(t)\ \dd\alpha\dd\beta
        \end{aligned}
    \end{equation}
    Let us analyze the input-output behavior when a periodic input $u$ with period $T$ and with one maximum $u_{\max}=\beta_1$ and one minimum $u_{\min}=-\beta_1$ is applied to this operator. Consider five time instances $t_1<t_2<t_3<t_4<t_5$ with $t_1\geq T$ and such that $u(t_1)=u_{\min}=-\beta_1$, $u(t_2)=0$, $u(t_3)=u_{\max}=\beta_1$, $u(t_4)=0$, and $u(t_5)=u_{\min}=-\beta_1$. It is clear that $u(t)$ is monotonically increasing in the interval $t_1 \leq t <t_3$ and monotonically decreasing in the interval $t_3 \leq t <t_5$. An example of an input signal $u$ satisfying these conditions is illustrated in Fig. \ref{fig:example_butterfly_phaseplot}(a). Using such input signal, the output signal $y(t)$ can be computed analytically for every $t\in[t_1,t_5]$ (i.e. for one periodic interval when the phase plot forms a hysteresis loop $\mathcal{H}_{u,y}$) based on (\ref{eq:example_butterfly_y}) and is given by
    \begin{equation}\label{eq:example_butterfly_y_solved}
        \begin{aligned}
            &\big(\mathcal{P}(u,L_0)\big)(t) = \\
                &\left\{ \begin{array}{ll} 
                -\big(u_{\max}+u(t)\big)^2&\text{if } t_1\leq t < t_2 \\ 
                -\big(u_{\max}-u(t)\big)\big(u_{\max}+3u(t)\big)&\text{if } t_2\leq t < t_3 \\ 
                -\big(u_{\max}-u(t)\big)^2&\text{if } t_3\leq t < t_4 \\ 
                -\big(u_{\max}+u(t)\big)\big(u_{\max}-3u(t)\big)&\text{if } t_4\leq t < t_5
                \end{array} \right.
        \end{aligned}
    \end{equation}
    Fig. \ref{fig:example_butterfly_phaseplot}(b) shows the corresponding output signal. By plotting the phase plot as in Fig. \ref{fig:example_butterfly_phaseplot}(c), we can see immediately that the resulting butterfly loop is symmetric as expected. Furthermore using (\ref{eq:closed_area}) it can be validated that the signed-area enclosed by this curve is equal to zero.
\end{example}\vspace*{2mm}

{
    As a final remark of this section, note that the assumption of the boundary $B$ that separates the polar regions of the weighting function being monotonically decreasing is made to simplify the analysis in Proposition \ref{prop:butterfly_preisach}. Such assumption guarantees that it is always possible to find the extended domains $P_{1+}^{\text{ext},\lambda}$ or $P_{1-}^{\text{ext},\lambda}$ where the weighting function is sign-definite.
    However, according to Definition \ref{def:butterfly_operator} and Lemma \ref{lemma:relay_area}, we have that $\mathcal{P}$ is a {\em Preisach butterfly operator} as long as we can find a {\em hysteresis loop} $\mathcal{H}_{u,y}$ with an input $u$ whose minimum and maximum can parameterize a subdomain $P_1$ of the form $P_1:=\left\{ (\alpha,\beta)\in P\ |\ u_{\min}\leq\beta\leq\alpha,\, u_{\min}\leq\alpha\leq u_{\max}\right\}$ which satisfies
    \begin{align*}
        \iint\displaylimits_{(\alpha,\beta)\in P_1}\mu(\alpha,\beta)\ \big(\alpha-\beta\big) \dd{\alpha} \dd\beta = 0.
    \end{align*}
    It follows that the monotonically decreasing property of the boundary $B$ is not necessary to obtain a {\em Preisach butterfly operator}.
}


\section{Preisach Multi-loop Operator}\label{sec:preisach_multiloop_operator}

In the previous section, we have shown that a {\em butterfly hysteresis operator} can be obtained from a Preisach operator with a two-sided weighting function.
Following from the condition of zero total signed-area in Definition \ref{def:butterfly_operator}, the previous analysis is based on finding an input $u$ such that each subloop contribution to the total signed-area of the hysteresis loop canceled each other. This analysis exploits the particular two-sided structure of the weighting function $\mu$. However, imposing this structure to the weighting function $\mu$ is only a sufficient condition to obtain a {\em butterfly hysteresis operator} which, in addition, restricts all the hysteresis loops obtained from the Preisach operator to have at most two subloops.

In this section we study a larger class of Preisach operators with more complex weighting functions that are not necessarily two-sided. The Preisach operators in this class can produce hysteresis loops with more than two subloops. Therefore, using an analysis based only on the the total enclosed signed-area of the hysteresis loops is no longer applicable for studying this class of hysteresis operators since the signed-area does not determine directly the number of subloops in a given hysteresis loop.
In this case, we must note that if a hysteresis loop has two or more subloops each one of the subloops is connected to another subloop by at least one self-intersection point of the hysteresis loop. We will call these points the crossover points of a hysteresis loop and characterize them as follows.
Consider a hysteresis loop $\mathcal{H}_{u,y}$ obtained from an input-output pair $(u,y)$ of a hysteresis operator $\Phi$ with $y=\Phi(u)$.
Let us select $t_1\geq t_p$ such that $t_1<t_2<t_1+T$ is a monotone partition of one periodic interval of $u$ and $u(t)$ is monotonically increasing when $t\in [t_1,t_2]$ and monotonically decreasing when $t \in [t_2,t_1+T]$.
We can split the hysteresis loop $\mathcal{H}_{u,y}$ into two segments that correspond to the subintervals of the monotone partition given by
\begin{align}
    \mathcal{H}_{u,y}^{+} &:= \{ (u(t),y(t))\ |\ t\in[t_1,t_2] \}, \label{eq:hysteresis_loop_partition+}\\
    \mathcal{H}_{u,y}^{-} &:= \{ (u(t),y(t))\ |\ t\in[t_2,t_1+T] \}. \label{eq:hysteresis_loop_partition-}
\end{align}
We define formally a crossover point as follows.
\begin{definition}\label{def:crossover_point}
    Consider a hysteresis loop $\mathcal{H}_{u,y}$. A point $(u_c,y_c)\in \mathcal{H}_{u,y}$ is called a {\em crossover point} if $(u_c,y_c) \in \mathcal{H}_{u,y}^{+} \cap \mathcal{H}_{u,y}^{-}$. $\hfill \triangle$
\end{definition} \vspace{0.1cm}
We remark from the definition above that a hysteresis loop $\mathcal{H}_{u,y}$ will always have at least two  crossover points corresponding to the points where the input $u$ achieves its extrema $(u_{\min},y_1), (u_{\max},y_2)\in \mathcal{H}_{u,y}$ with $(u_{\min},y_1) = (u(t_1),y(t_1)) = (u(t_1+T),y(t_1+T))$ and $(u_{\max},y_2) = (u(t_2),y(t_2))$.
Moreover, it is possible for a hysteresis loop $\mathcal{H}_{u,y}$ to have an infinite number of crossover points if, for instance, there exists a segment of intersection between $\mathcal{H}_{u,y}^{+}$ and $\mathcal{H}_{u,y}^{-}$, i.e. there exists a time subinterval $[t_3,t_4]\subset[t_1,t_1+T]$ with $t_4>t_3$ such that $(y(t),u(t))\in\mathcal{H}_{u,y}^{+} \cap \mathcal{H}_{u,y}^{-}$ for every $t\in[t_3,t_4]$.
For this reason, it can be checked that the numbers of subloops in a hysteresis loop $\mathcal{H}_{u,y}$ is not determined by the number of crossover points but by the number of maximal connected subsets in $\mathcal{H}_{u,y}^{+} \cap \mathcal{H}_{u,y}^{-}$, where each maximal connected subset can be a singleton in the case that the corresponding crossover point does not belong to a segment of intersection between $\mathcal{H}_{u,y}^{+}$ and $\mathcal{H}_{u,y}^{-}$. By a maximal connected subset in $A$ we mean a connected subset $B\subset A$ with the property that there does not exist other connected subset $C \subset A$ such that $B \subset C$. 

Using these notions we introduce the definition of {\em multi-loop hysteresis operator} as follows.
\begin{definition}\label{def:multiloop_operator}
    A hysteresis operator $\Phi$ is called a {\em multi-loop hysteresis operator} if there exists a hysteresis loop $\mathcal{H}_{u,y}$, where $y=\Phi(u)$, with at least one maximal connected subset $C\subset \mathcal{H}_{u,y}^{+}\cap\mathcal{H}_{u,y}^{-}$ such that $u_{\min}\neq u_c\neq u_{\max}$ for every $(u_c,y_c)\in C$. $\hfill \triangle$
\end{definition}\vspace{0.2cm}

Definition \ref{def:multiloop_operator} is asking for the existence of at least one maximal connected subset of crossover points besides the ones that contain the crossover points corresponding to the maximum and minimum values of the input. The existence of this maximal connected subset guarantees that the hysteresis loop will composed of at least two subloops. Moreover, it is clear that every {\em butterfly hysteresis operator} is then a {\em multi-loop hysteresis operator}.
Before characterizing the class of Preisach operators that satisfy the conditions to be {\em multi-loop hysteresis operators}, we introduce next lemma that allows us to relate the existence of a crossover point in a hysteresis loop obtained from a Preisach operator with the integration of its weighting function $\mu$ over a rectangular region of $P$ delimited by the maximum and minimum values of the input.

\begin{lemma}\label{lemma:crossover_point}
    Consider a hysteresis loop $\mathcal{H}_{u,y}$ obtained from an input-output pair $(u,y)$ of a Preisach operator $\mathcal{P}$ with a weighting function $\mu$. A point $(u_c, y_c)\in\mathcal{H}_{u,y}$ is a {\em crossover point} if and only if
    \begin{equation}\label{eq:crossover_condition}
        \iint\displaylimits_{(\alpha,\beta)\in \Omega_c} \mu(\alpha,\beta)\ \dd\alpha \dd\beta = 0
    \end{equation}
    where the region $\Omega_c$ is defined by $\Omega_c:=\{(\alpha,\beta)\in P\ |\ u_c<\alpha< u_{\max},\ u_{\min}<\beta< u_c \}$.
\end{lemma}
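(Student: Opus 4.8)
The plan is to reduce the geometric crossover condition to the equality of two scalar output values, and then to evaluate that difference as an integral of $\mu$ over $\Omega_c$. First I would exploit that on each of the two monotone branches the input is a bijection onto $[u_{\min},u_{\max}]$, so that by rate-independence the output along $\mathcal{H}_{u,y}^{+}$ (resp. $\mathcal{H}_{u,y}^{-}$) is a single-valued function of the input level; denote these by $y^{+}(u_c)$ and $y^{-}(u_c)$. Consequently a point $(u_c,y_c)$ lies in $\mathcal{H}_{u,y}^{+}\cap\mathcal{H}_{u,y}^{-}$ precisely when $y^{+}(u_c)=y^{-}(u_c)$, so it remains to show that $y^{-}(u_c)-y^{+}(u_c)=2\iintdl_{\Omega_c}\mu(\alpha,\beta)\,\dd\alpha\,\dd\beta$.

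Next I would determine the state of each relay $\mathcal R^{\circlearrowleft}_{\alpha,\beta}$ on the two branches using the interface description $P=P^+_t\cup P^-_t$. On the steady loop the initial memory has been wiped out, so only the relays in the active region $\{(\alpha,\beta)\in P \mid \alpha<u_{\max},\ \beta>u_{\min}\}$ switch during a period; every relay outside this region keeps one fixed value throughout the whole period and hence contributes identically to $y^{+}$ and to $y^{-}$, cancelling in the difference. For an active relay, when the input rises to the level $u_c$ its upper threshold has been exceeded exactly when $\alpha<u_c$, so its state is $+1$ iff $\alpha<u_c$; when the input falls back to the level $u_c$ its lower threshold has been undercut exactly when $\beta>u_c$, so its state is $+1$ iff $\beta<u_c$.

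Finally I would subtract the two relay-state functions. Using $\alpha>\beta$, the difference vanishes everywhere except on the active relays with $\alpha>u_c$ and $\beta<u_c$, where it equals $(+1)-(-1)=2$; intersecting this set with the active region recovers exactly $\Omega_c=\{(\alpha,\beta)\in P \mid u_c<\alpha<u_{\max},\ u_{\min}<\beta<u_c\}$. This yields $y^{-}(u_c)-y^{+}(u_c)=2\iintdl_{\Omega_c}\mu(\alpha,\beta)\,\dd\alpha\,\dd\beta$, so the crossover condition $y^{+}(u_c)=y^{-}(u_c)$ is equivalent to \eqref{eq:crossover_condition}. The step I expect to be the main obstacle is the bookkeeping in the second paragraph: one must justify carefully, through the wiping-out of the Preisach memory under a periodic input, that the relays outside the active region are genuinely frozen across the whole period, and one must note that boundary sets such as $\{\alpha=u_c\}$ or $\{\beta=u_c\}$ have zero measure and therefore do not affect the value of the integral.
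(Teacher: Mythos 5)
Your proposal is correct and follows essentially the same route as the paper: both compare the output at input level $u_c$ on the ascending and descending branches, observe that all relays outside the rectangle $\Omega_c$ are in the same state on both branches (those outside the active triangle being frozen after the first period), and conclude that the branch difference equals $2\iintdl_{\Omega_c}\mu(\alpha,\beta)\,\dd\alpha\,\dd\beta$, so the crossover condition is equivalent to \eqref{eq:crossover_condition}. The paper organizes this as separate sufficiency and necessity arguments using explicit time instants $\tau_1,\tau_2$ and the time-varying regions $\Omega_1(t),\Omega_2(t),\Omega_3(t)$, but the underlying computation is identical to yours.
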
\vspace*{0.2cm}

\begin{proof}{Lemma \ref{lemma:crossover_point}}
\noindent({\em Sufficiency})\ 
Let $(u_c,y_c)$ be the {\em crossover point} of a hysteresis loop $\mathcal{H}_{u,y}$ and consider the corresponding subsets $\mathcal{H}_{u,y}^{+}$ and $\mathcal{H}_{u,y}^{-}$ of $\mathcal{H}_{u,y}$ as defined in \eqref{eq:hysteresis_loop_partition+} and \eqref{eq:hysteresis_loop_partition-} with a monotone partition $t_1<t_2<t_1+T$ of the periodic input $u$. 
When $u_c=u_{\min}=u(t_1)=u(t_1+T)$ or $u_c=u_{\max}=u(t_2)$, the region $\Omega_c$ is empty and \eqref{eq:crossover_condition} holds trivially.
Therefore, let us consider the case when there exist two time instants $\tau_1 \in (t_1, t_2)$ and $\tau_2 \in (t_2, t_1 + T)$ such that $(u_c,y_c)=(u(\tau_1),y(\tau_1))=(u(\tau_2),y(\tau_2))$ with $u_{\min}\neq u_c\neq u_{\max}$.  

Let us analyze the input-output behavior of the Preisach operator in the intervals $(t_1, t_2)$ and $(t_1, t_2+T)$. For this, consider a subdomain of the Preisach plane given by $P_1:=\{(\alpha,\beta)\in P\ |\ \alpha< u_{\max},\ \beta> u_{\min} \}$ which is a triangle whose vertices are at $(u_{\max},u_{\min})$, $(u_{\max},u_{\max})$ and $(u_{\min},u_{\min})$. It is clear that at every time instance $t\geq t_p=T$ the state of relays in $P\backslash P_1$ remains the same as given by the initial condition. We define three time varying disjoint regions of $P_1$ whose boundaries depend on the instantaneous value of the input $u(t)$ and which are given by
\begin{align}
	\Omega_1(t) &:= \left\{ (\alpha,\beta) \in P_1\ |\ u(t)<\beta \right\},\label{eq:crossover_timevar_omega1}\\
	\Omega_2(t) &:= \left\{ (\alpha,\beta) \in P_1\ |\ \beta<u(t)<\alpha \right\},\label{eq:crossover_timevar_omega2}\\
	\Omega_3(t) &:= \left\{ (\alpha,\beta) \in P_1\ |\ \alpha<u(t) \right\}.\label{eq:crossover_timevar_omega3}
\end{align}
The region $\Omega_1(t)$ is a triangle whose vertices are at $(u(t),u(t))$, $(u_{\max},u(t))$ and $(u_{\max},u_{\max})$, the region $\Omega_3(t)$ is a triangle whose vertices are at $(u(t),u(t))$, $(u(t),u_{\min})$ and $(u_{\min},u_{\min})$,
and the region $\Omega_2(t)$ is a rectangle whose vertices are at $(u(t),u(t))$, $(u_{\max},u_{\min})$, $(u(t),u_{\min})$ and $(u_{\max},u(t))$. 
It can be checked that for every time instance $t\in[t_1,t_1+T]$, all relays corresponding to the regions $\Omega_1(t)$ and $\Omega_3(t)$ are in state $-1$ and $+1$, respectively. Moreover, at time instances $\tau_1$ and $\tau_2$ we have that  $\Omega_2(\tau_1)=\Omega_2(\tau_2)=\Omega_c$.

The required condition \eqref{eq:crossover_condition} is obtained computing the output of the Preisach operator at both time instances $\tau_1$ and $\tau_2$ using the regions $\Omega_1(t)$, $\Omega_2(t)$ and $\Omega_3(t)$ as follows. At time instance $t_1$ the input $u$ reaches its minimum value. Thus we have $u(t_1) = u_{\min}$ which implies that all relays in the subdomain $P_1$ are in $-1$ state. As the input increases, at every time instance $t\in(t_1,t_2)$ the region $\Omega_3(t)$ indicates the relays whose state has changed from $-1$ to $+1$ while in the regions $\Omega_2(t)$ and $\Omega_1(t)$ all relays remain in $-1$ state. Therefore, at time instance $\tau_1$ the output of the Preisach operator is given by
\begin{equation}\label{eq:cross_output_input_incr}
	\begin{aligned}
		& y(\tau_1) = \big(\mathcal{P}(u,L_0)\big)(\tau_1)\\[1mm]
				&-\disp\iint\displaylimits_{(\alpha,\beta)\in \Omega_1(\tau_1)}\ \mu(\alpha, \beta) \ \dd\alpha \dd\beta \ \
				-\disp\iint\displaylimits_{(\alpha,\beta)\in \Omega_2(\tau_1)}\ \mu(\alpha, \beta) \ \dd\alpha \dd\beta \\
				&+\disp\iint\displaylimits_{(\alpha,\beta)\in \Omega_3(\tau_1)}\ \mu(\alpha, \beta) \ \dd\alpha \dd\beta \\
				&+\disp\iint\displaylimits_{(\alpha,\beta)\in P\backslash P_1}\ \mu(\alpha, \beta) \big(\mathcal R^{\circlearrowleft}_{\alpha,\beta}(u,r_{\alpha,\beta}(L_0)) \big)(\tau_1)\ \dd\alpha \dd\beta.
	\end{aligned}
\end{equation}
At time instance $t_2$ the input $u$ reaches now its maximum value. Thus, in this case we have $u(t_2) = u_{\max}$ which implies that all relays in the subdomain $P_1$ are in $+1$ state. Subsequently, as the input decreases, at every time instance $t\in(t_2,t_1+T)$ the region $\Omega_1(t)$ indicates the relays whose state has changed from $+1$ to $-1$ while in the regions $\Omega_2(t)$ and $\Omega_3(t)$ all relays remain in $+1$ state. Therefore, at time instance $\tau_2$ the output of the Preisach operator is given by
\begin{equation}\label{eq:cross_output_input_decr}
	\begin{aligned}
		& y(\tau_2) = \big(\mathcal{P}(u,L_0)\big)(\tau_2)\\[1mm]
				&-\disp\iint\displaylimits_{(\alpha,\beta)\in \Omega_1(\tau_2)}\ \mu(\alpha, \beta) \ \dd\alpha \dd\beta \ \
				+\disp\iint\displaylimits_{(\alpha,\beta)\in \Omega_2(\tau_2)}\ \mu(\alpha, \beta) \ \dd\alpha \dd\beta \\
				&+\disp\iint\displaylimits_{(\alpha,\beta)\in \Omega_3(\tau_2)}\ \mu(\alpha, \beta) \ \dd\alpha \dd\beta \\
				&+\disp\iint\displaylimits_{(\alpha,\beta)\in P\backslash P_1}\ \mu(\alpha, \beta) \big(\mathcal R^{\circlearrowleft}_{\alpha,\beta}(u,r_{\alpha,\beta}(L_0)) \big)(\tau_2)\ \dd\alpha \dd\beta.
	\end{aligned}
\end{equation}
Subtracting (\ref{eq:cross_output_input_decr}) and (\ref{eq:cross_output_input_incr}) we have
\begin{equation}\label{eq:cross_output_diff}
	\begin{aligned}
		0 &= y(\tau_2) - y(\tau_1) \\
        &= \disp\iint\displaylimits_{(\alpha,\beta)\in \Omega_2(\tau_2)}\ \mu(\alpha, \beta)\ \dd\alpha \dd\beta 
        + \disp\iint\displaylimits_{(\alpha,\beta)\in \Omega_2(\tau_1)}\ \mu(\alpha, \beta)\ \dd\alpha \dd\beta\\
		&= 2\disp\iint\displaylimits_{(\alpha,\beta)\in \Omega_c}\ \mu(\alpha, \beta)\ \dd\alpha \dd\beta.
	\end{aligned}
\end{equation}

\noindent({\em Necessity})\ Assume that \eqref{eq:crossover_condition} 
holds for some value $u_c\in[u_{\min},u_{\max}]$.
Consider the time instance $\tau_1 \in [t_1, t_2]$ when $u(\tau_1)=u_c$. At this time instance, the output $y(\tau_1)$ is given as in (\ref{eq:cross_output_input_incr}) and $(u(\tau_1),y(\tau_1))\in \mathcal{H}_{u,y}^{+}$. Similarly, let $\tau_2 \in [t_2, t_1+T]$ be the time instance when $u(\tau_2)=u_c$. At this time instance the output $y(\tau_2)$ is given as in (\ref{eq:cross_output_input_decr}) and $(u(\tau_2),y(\tau_2))\in \mathcal{H}_{u,y}^{-}$. Since \eqref{eq:crossover_condition} holds, by subtracting (\ref{eq:cross_output_input_decr}) and (\ref{eq:cross_output_input_incr}) we obtain (\ref{eq:cross_output_diff}) again. It follows that $y(\tau_2) = y(\tau_1) = y_c$, and consequently $(u_c,y_c)\in \mathcal{H}_{u,y}^{+} \cap \mathcal{H}_{u,y}^{-}$.
\end{proof}


It is clear that Lemma \ref{lemma:crossover_point} can be used to find the crossover points of a hysteresis loop obtained from a Preisach operator. However, noting that the region $\Omega_c$ in \eqref{eq:crossover_condition} depends explicitly on the maximum and minimum of the input $u$ applied to the Preisach operator, we could also use Lemma \ref{lemma:crossover_point} to estimate an input $u$ that produces a hysteresis loop with crossover points additional to the trivial ones corresponding to the maximum and minimum value of the input. 

\begin{example}\label{ex:crossoverpoints_butterfly_and_double_loop}
    Let us recall the Preisach operator from Example \ref{ex:preisach_butterfly} whose weighting function $\mu$ is defined in \eqref{eq:example_butterfly_mu}. We can check that a non-empty region $\Omega_c$ satisfying \eqref{eq:crossover_condition} is given by
    \begin{equation}\label{eq:example_doubleloop_omegac}
        \Omega_c = \{(\alpha,\beta)\in P\ |\ 0<\alpha<\beta_1,\ -\beta_1<\beta<0 \}, 
    \end{equation}
    which is illustrated in Fig. \ref{fig:example_butterfly_mu_omegac}. 
    Therefore, noting the limits of region $\Omega_c$ in \eqref{eq:example_doubleloop_omegac}, it follows that applying to this Preisach operator an input $u$ with one maximum $u_{\max}=\beta_1$ and one minimum $u_{\min}=-\beta_1$ yields a crossover point $(u_c,y_c)$ with $u_c=0$ as it has been shown in the phase plot of Fig \ref{fig:example_butterfly_phaseplot}. 
    Furthermore, using \eqref{eq:example_butterfly_y_solved} we can find that $y_c=-u_{\max}^2$.
    \begin{figure}[ht]
        \centering
        \includegraphics[width=\columnwidth,trim={0 0 0 0},clip]{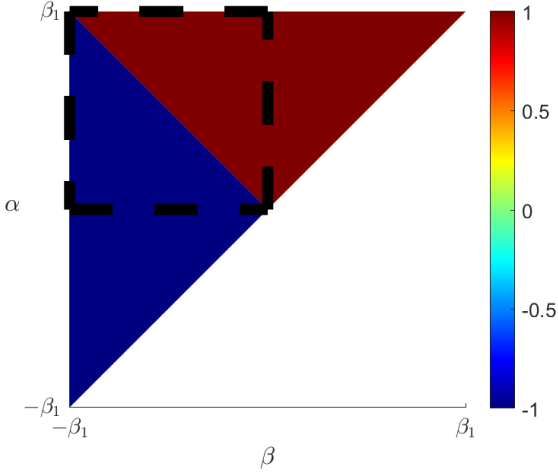}
        \caption{Weighting function $\mu(\alpha,\beta)$ of Example \ref{ex:preisach_butterfly} defined in \eqref{eq:example_butterfly_mu} where the region $\Omega_c$ given by \eqref{eq:example_butterfly_mu} and that satisfies \eqref{eq:crossover_condition} is indicated by the dashed line. \label{fig:example_butterfly_mu_omegac}}
    \end{figure}
\end{example}\vspace{0.1cm}

We remark that only the existence of crossover points does not guarantee that the hysteresis loop is composed of subloops with different orientation. We illustrate this with the next example.

\begin{example}\label{ex:preisach_multiloop}
    Consider a subdomain of Preisach plane $P_1 = \lbrace (\alpha,\beta)\in P\ |\ \alpha<\beta_1,\,\beta>-\beta_1 \rbrace$ with $\beta_1>0$ and define a weighting function $\mu$ given by
    \begin{equation}\label{eq:example_doubleloop_mu}
        \begin{aligned}
            \mu(\alpha,\beta) &:= \left\{\begin{array}{rl}
                -1, & \text{if } (\alpha,\beta) \in P_{1-}, \\
                1, & \text{if } (\alpha,\beta) \in P_1\backslash P_{1-},\\
                0, & \text{otherwise}.
            \end{array}\right.
        \end{aligned}
    \end{equation}
    where $P_{1-}=\{ (\alpha,\beta)\in P_1\ |\ -\beta_1<\beta<0,\,0<\alpha<\beta+\beta_1 \}$.
    It can be checked that the same region $\Omega_c$ given as in \eqref{eq:example_doubleloop_omegac} satisfies condition \eqref{eq:crossover_condition} with the weighting function $\mu$ defined by \eqref{eq:example_doubleloop_mu}. Fig. \ref{fig:doubleloop_same_oriented_mu} illustrates this weighting function with the region $\Omega_c$ indicated by a dashed line. It follows that the hysteresis loop obtained from a Preisach operator with a weighting function $\mu$ defined by \eqref{eq:example_doubleloop_mu} and whose input $u$ has one maximum $u_{\max}=\beta_1$ and one minimum $u_{\min}=-\beta_1$ has a crossover point with coordinates $(u_c,y_c)=(0,0)$, where $y_c$ is computed using \eqref{eq:cross_output_input_incr} or \eqref{eq:cross_output_input_decr}. Nevertheless, by simple inspection of the phase plot include in Fig. \ref{fig:doubleloop_same_oriented_phaseplot}, we can check that the hysteresis loop is composed of two subloops with the same orientation.
    
    \begin{figure}[ht]
        \centering
        \includegraphics[width=\columnwidth,trim={0 0 0 0},clip]{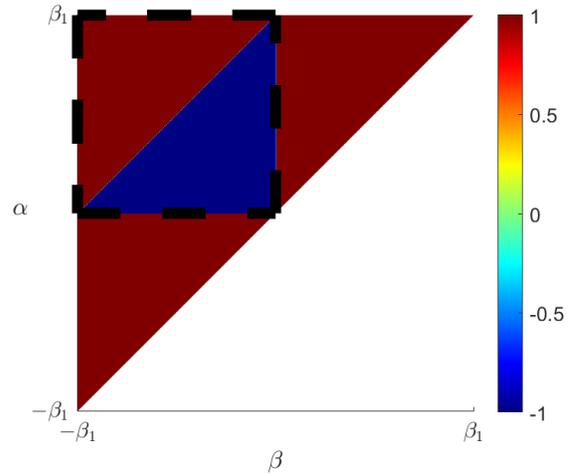}
        \caption{Weighting function $\mu(\alpha,\beta)$ of a Preisach operator which can produce a hysteresis loop whose subloops have the same orientation. The region $\Omega_c$ that satisfies \eqref{eq:crossover_condition} is indicated by the dashed line. \label{fig:doubleloop_same_oriented_mu}}
    \end{figure}
    
    \begin{figure}[ht]
        \centering
        \subfigure[Input signal]{\includegraphics[width=0.23\textwidth,trim={0 0 0 0.2cm},clip]{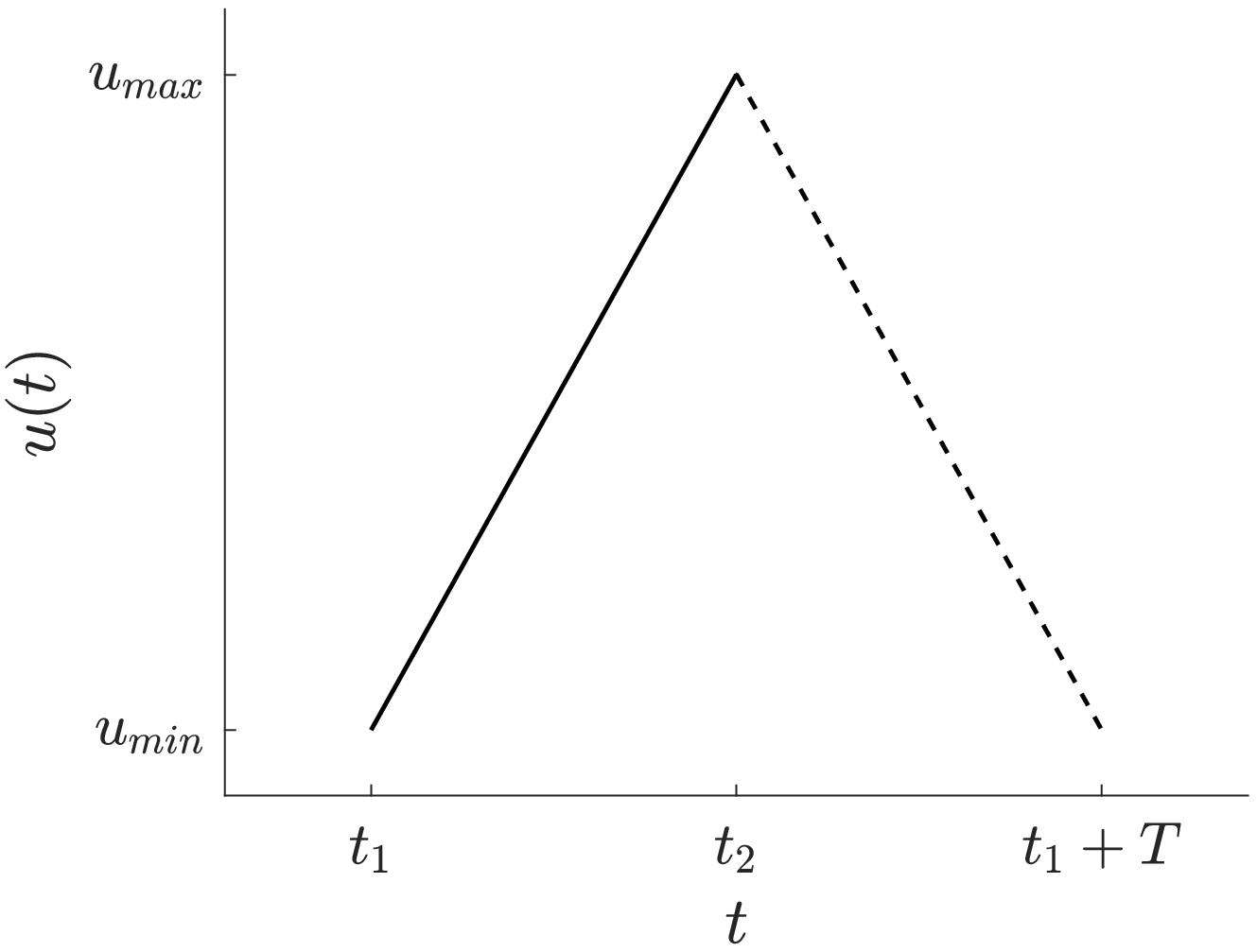}}
        \subfigure[Output signal]{\includegraphics[width=0.23\textwidth,trim={0 0 0 0.2cm},clip]{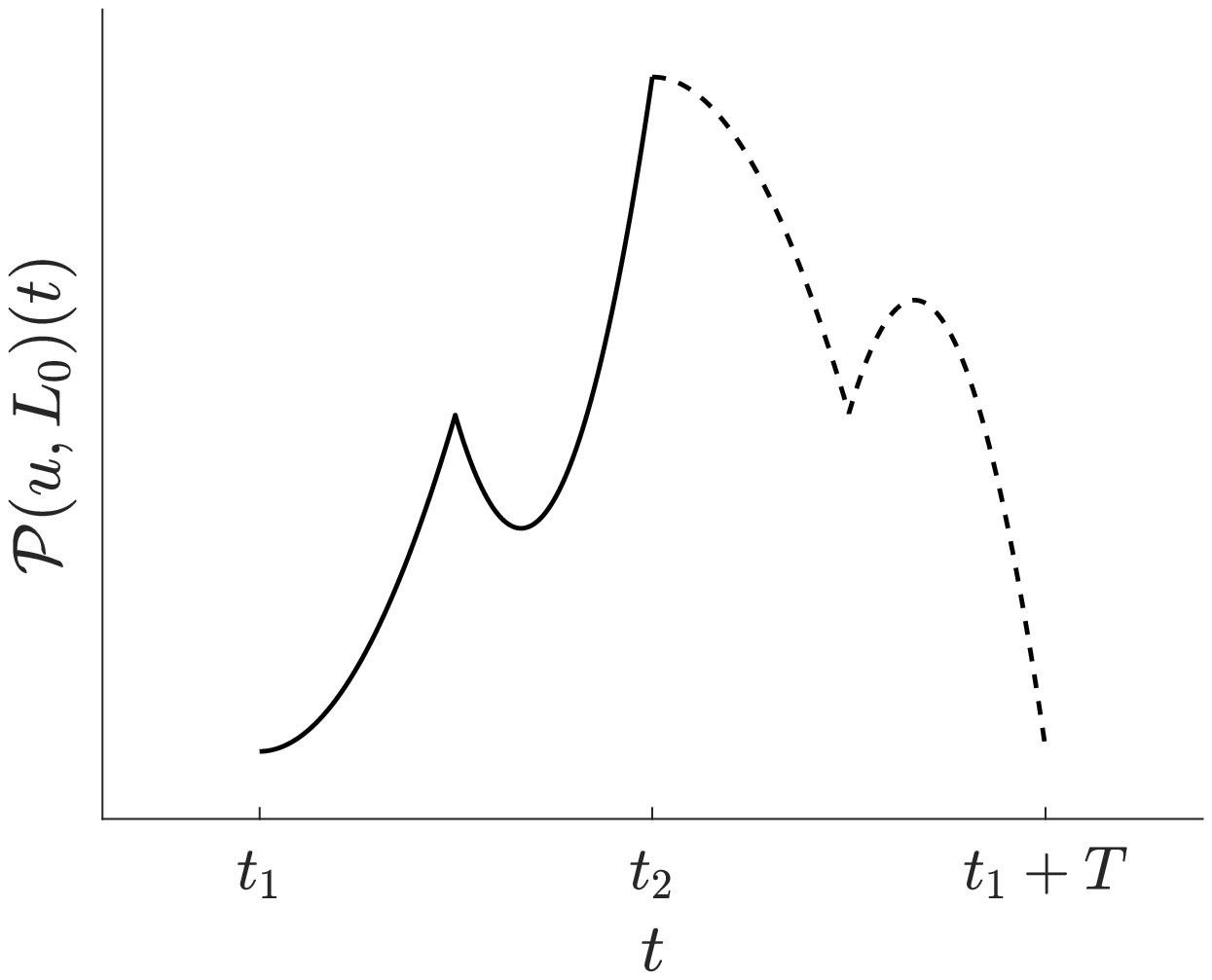}}
        \subfigure[Input-output phase plot]{\includegraphics[width=0.35\textwidth,trim={0 0 0 0.2cm},clip]{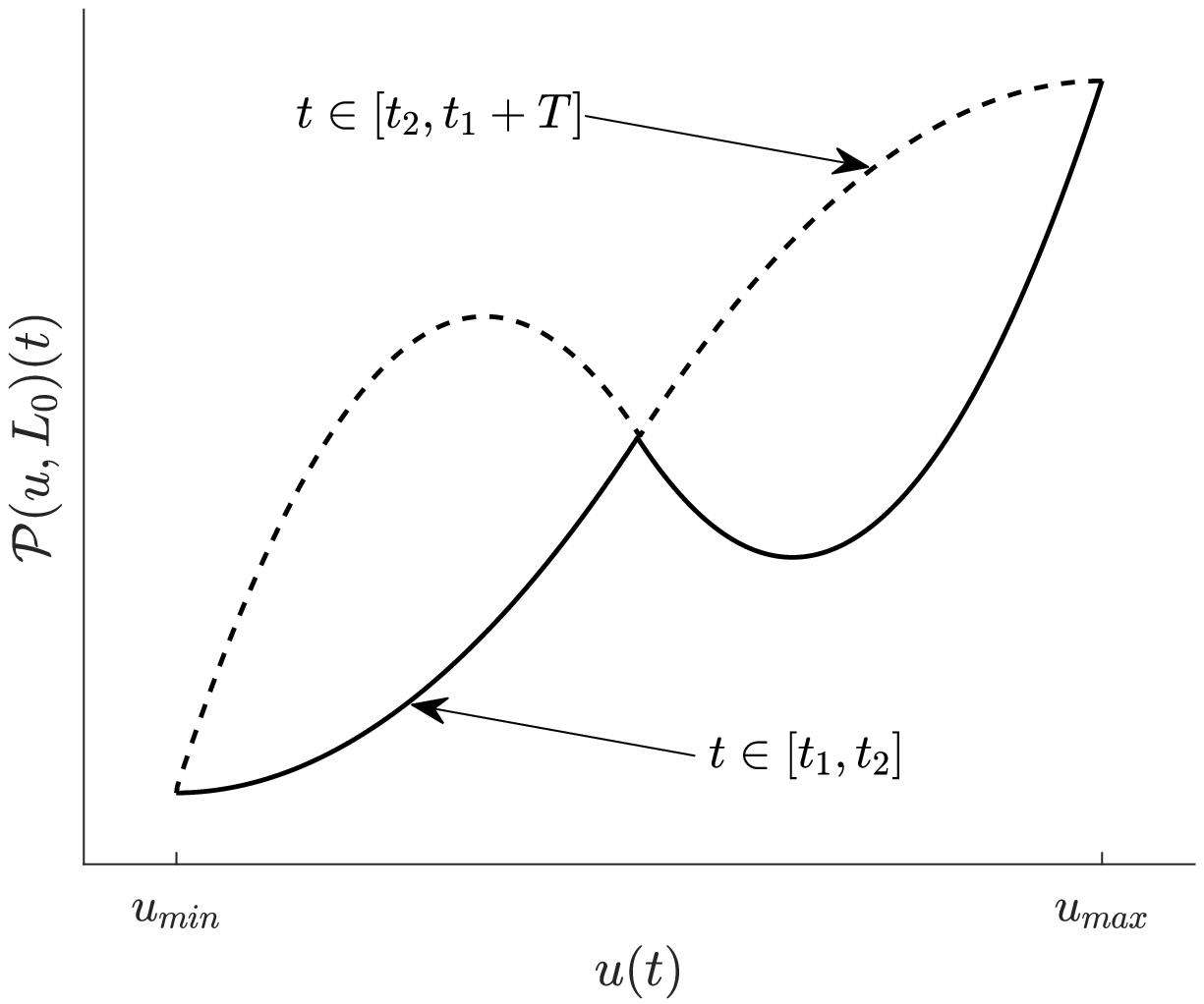}}
        \caption{Input-output hysteresis response using the multiple loops Preisach multi-loop operator. (a). The plot of input signal $u$ in a periodic time interval $[t_1,t_1+T]$. (b). The corresponding plot of output signal $y$. (c). The input-output phase plot of input and output signal which shows hysteresis loop with two subloops in the same orientation.}
        \label{fig:doubleloop_same_oriented_phaseplot}
    \end{figure}
\end{example}\vspace{0.1cm}


We introduce now a proposition that shows how a {\em multi-loop hysteresis operator} can be obtained from a Preisach operator.

\begin{proposition}\label{prop:multiloop_preisach_operator}
    Consider a Preisach operator $\mathcal{P}$ as in $\eqref{eq:preisach_operator}$ with a weighting function $\mu$. Assume that there exists a point $(\alpha_0,\beta_0)\in P$ with $\alpha_0>\beta_0$ and three values $\alpha_{1^-}<\alpha_{1}<\alpha_{1^+}$ such that $\beta_0<\alpha_{1^-}$ and $\alpha_{1^+}<\alpha_0$, and \eqref{eq:crossover_condition} holds for the region
    \begin{equation}\label{eq:multiloop_preisach_omegac}
        \Omega_{\alpha_1}=\{(\alpha,\beta)\in P\ |\ \alpha_1<\alpha<\alpha_0,\ \beta_0<\beta<\alpha_1 \}
    \end{equation}
    but does not hold for the regions
    \begin{align}
        \Omega_{\alpha_{1^-}}&=\{(\alpha,\beta)\in P\ |\ \alpha_{1^-}<\alpha<\alpha_0,\ \beta_0<\beta<\alpha_{1^-} \}, \label{eq:multiloop_preisach_not_omegac1}\\
        \Omega_{\alpha_{1^+}}&=\{(\alpha,\beta)\in P\ |\ \alpha_{1^+}<\alpha<\alpha_0,\ \beta_0<\beta<\alpha_{1^+} \} \label{eq:multiloop_preisach_not_omegac2}.
    \end{align}
    Then $\mathcal{P}$ is a {\em multi-loop hysteresis operator}.
\end{proposition}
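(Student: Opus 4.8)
The plan is to convert the abstract hypotheses on $\mu$ into a concrete statement about the crossover points of one carefully chosen hysteresis loop, and then to use a one-dimensional connectedness argument to extract a nontrivial component of crossover points. First I would apply to $\mathcal{P}$ a periodic input $u$ of period $T$ that is strictly increasing and then strictly decreasing on one period, with $u_{\max}=\alpha_0$ and $u_{\min}=\beta_0$, and with an initial interface satisfying $(u(0),u(0))\in L_0$. As in Lemma \ref{lemma:relay_area} and Proposition \ref{prop:butterfly_preisach}, all relay states settle after one period, so the pair $(u,y)$ with $y=\mathcal{P}(u,L_0)$ produces a genuine hysteresis loop $\mathcal{H}_{u,y}$ in the sense of Definition \ref{def:hysteresis_loop}. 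With these extrema fixed, the region $\Omega_c$ appearing in Lemma \ref{lemma:crossover_point} for a candidate coordinate $u_c$ is precisely $\{(\alpha,\beta)\in P\mid u_c<\alpha<\alpha_0,\ \beta_0<\beta<u_c\}$, so it coincides with $\Omega_{\alpha_1}$, $\Omega_{\alpha_{1^-}}$ and $\Omega_{\alpha_{1^+}}$ from \eqref{eq:multiloop_preisach_omegac}--\eqref{eq:multiloop_preisach_not_omegac2} when $u_c$ equals $\alpha_1$, $\alpha_{1^-}$ and $\alpha_{1^+}$, respectively.

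Next I would set $Z:=\{u_c\in[u_{\min},u_{\max}]\mid \iint_{\Omega_{u_c}}\mu(\alpha,\beta)\,\dd\alpha\,\dd\beta=0\}$ and invoke the biconditional of Lemma \ref{lemma:crossover_point}: $Z$ is exactly the set of input-coordinates of crossover points, i.e.\ the projection $\pi(u_c,y_c):=u_c$ maps $\mathcal{H}_{u,y}^{+}\cap\mathcal{H}_{u,y}^{-}$ onto $Z$. Under the hypotheses we have $\alpha_1\in Z$ while $\alpha_{1^-},\alpha_{1^+}\notin Z$, and by construction $\beta_0<\alpha_{1^-}<\alpha_1<\alpha_{1^+}<\alpha_0$. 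The necessity direction of Lemma \ref{lemma:crossover_point} then supplies an actual crossover point $(\alpha_1,y_c)\in\mathcal{H}_{u,y}^{+}\cap\mathcal{H}_{u,y}^{-}$ whose input-coordinate lies strictly inside $(u_{\min},u_{\max})$.

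The decisive step is to take $C$ to be the maximal connected subset of $\mathcal{H}_{u,y}^{+}\cap\mathcal{H}_{u,y}^{-}$ containing $(\alpha_1,y_c)$ and to confine it away from the extrema. Every point of $C$ is a crossover point, so the sufficiency direction of Lemma \ref{lemma:crossover_point} gives $\pi(C)\subseteq Z$; and since $\pi$ is continuous, $\pi(C)$ is a connected subset of $\R$, hence an interval containing $\alpha_1$. Because $\pi(C)\subseteq Z$ while $\alpha_{1^-},\alpha_{1^+}\notin Z$ and $\alpha_{1^-}<\alpha_1<\alpha_{1^+}$, this interval can contain neither $\alpha_{1^-}$ nor $\alpha_{1^+}$, forcing $\pi(C)\subseteq(\alpha_{1^-},\alpha_{1^+})\subseteq(\beta_0,\alpha_0)=(u_{\min},u_{\max})$. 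Hence every $(u_c,y_c)\in C$ satisfies $u_{\min}\neq u_c\neq u_{\max}$, which is exactly the condition in Definition \ref{def:multiloop_operator}, so $\mathcal{P}$ is a multi-loop hysteresis operator.

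I expect the main obstacle to be the topological bridge in this last step: one must make precise that a maximal connected subset of the crossover set projects, under the continuous map $\pi$, onto an interval contained in the scalar zero set $Z$, so that the two non-vanishing values $\alpha_{1^-}$ and $\alpha_{1^+}$ genuinely trap the component strictly between $u_{\min}$ and $u_{\max}$. A secondary and more routine point is verifying that the chosen input indeed yields a hysteresis loop, which follows from the same settling-after-one-period reasoning already used in Lemma \ref{lemma:relay_area} and Proposition \ref{prop:butterfly_preisach}.
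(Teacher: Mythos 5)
Your proof is correct and follows essentially the same route as the paper: apply a periodic input with extrema $\alpha_0$ and $\beta_0$, use Lemma \ref{lemma:crossover_point} to get a crossover point at $\alpha_1$ and to rule out crossover points at $\alpha_{1^-}$ and $\alpha_{1^+}$, and conclude that the maximal connected component of $\mathcal{H}_{u,y}^{+}\cap\mathcal{H}_{u,y}^{-}$ containing $(\alpha_1,y_c)$ cannot reach the extrema. Your explicit projection-onto-an-interval argument just makes precise the topological step that the paper states as ``consequently, there does not exist a connected subset containing both points.''
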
\vspace{0.2cm}

\begin{proof}{Proposition \ref{prop:multiloop_preisach_operator}}
    Consider the hysteresis loop $\mathcal{H}_{u,y}$ obtained from the input-output pair $(u,y)$ with the input $u$ being periodic with one maximum $u_{\max} = \alpha_0$ and one minimum $u_{\min} = \beta_0$, and $y=\mathcal{P}(u,L_0)$.
    Let $t_1<t_2<t_1+T$ be the monotonic partition of the input that divides $\mathcal{H}_{u,y}$ into $\mathcal{H}_{u,y}^{-}$ and $\mathcal{H}_{u,y}^{+}$ with $u(t_1)=u(t_1+T)=u_{\min}$ and $u(t_2)=u_{\max}$, and consider six time instances $\tau_{1^-},\tau_1,\tau_{1^+}\in[t_1,t_2]$ and $\tau_{2^-},\tau_2,\tau_{2^+}\in[t_2,t_1+T]$ with $\tau_{1^-}<\tau_1<\tau_{1^+}$ and $\tau_{2^-}>\tau_2>\tau_{2^+}$ such that $u(\tau_{1^-})=u(\tau_{2^-})=\alpha_{1^-}$, $u(\tau_{1})=u(\tau_{2})=\alpha_{1}$ and $u(\tau_{1^+})=u(\tau_{2^+})=\alpha_{1^+}$.
    
    Using Lemma \ref{lemma:crossover_point} with the region $\Omega_{\alpha_{1}}$ defined in \eqref{eq:multiloop_preisach_omegac}, the hysteresis loop $\mathcal{H}_{u,y}$ has a crossover point $(u_c,y_c)$ with $u_c=u(\tau_{1})=u(\tau_{2})=\alpha_1$ and where $y_c=y(\tau_{1})=y(\tau_{2})$ is given by \eqref{eq:cross_output_input_incr} or \eqref{eq:cross_output_input_decr}. 
    
    Without loss of generality, let $C$ be the maximal connected subset of $\mathcal{H}_{u,y}^{+}\cap\mathcal{H}_{u,y}^{-}$ that contains $(u_c,y_c)$. To check that $C$ does not contain a crossover point of the form $(u_{\min},y_1)$ observe that since \eqref{eq:crossover_condition} does not hold for the region $\Omega_{\alpha_{1^-}}$ then using again Lemma \ref{lemma:crossover_point} we have that $(u(\tau_{1^-}),y(\tau_{1^-}))$ and $(u(\tau_{2^-}),y(\tau_{2^-}))$ are not crossover points and are not in $\mathcal{H}_{u,y}^{+}\cap\mathcal{H}_{u,y}^{-}$. Consequently, there does not exist connected subset of $\mathcal{H}_{u,y}^{+}\cap\mathcal{H}_{u,y}^{-}$ that could contain both $(u_c,y_c)$ and $(u_{\min},y_1)$.
    Similarly, we can check that $C$ does not contain a crossover point of the form $(u_{\max},y_2)$ by noting that \eqref{eq:crossover_condition} does not hold for the region $\Omega_{\alpha_{1^+}}$ which by Lemma \ref{lemma:crossover_point} implies that $(u(\tau_{1^+}),y(\tau_{1^+}))$ and $(u(\tau_{2^+}),y(\tau_{2^+}))$ are not crossover points and are not in $\mathcal{H}_{u,y}^{+}\cap\mathcal{H}_{u,y}^{-}$. It follows again that
    there does not exist connected subset of $\mathcal{H}_{u,y}^{+}\cap\mathcal{H}_{u,y}^{-}$ that could contain both $(u_c,y_c)$ and $(u_{\max},y_2)$.
\end{proof}\vspace{0.1cm}

We remark that Proposition \ref{prop:multiloop_preisach_operator} could be extended to consider more than one region $\Omega_\alpha$ where \eqref{eq:crossover_condition} holds. Let $\mu$ be a weighting function and consider values
\begin{equation*}
    \alpha_{i^-} <\alpha_i < \alpha_{i^+}, \quad \text{where } i\in \left\{ 1,\dots,m \right\} \text{ and } m\in \mathbb{Z}_+,
\end{equation*}
with $\beta_0<\alpha_{1^-}$ and $\alpha_{m^+}<\alpha_0$, and such that for every $j\in \left\{1,\dots,m-1 \right\}$ we have that $\alpha_{j^+}<\alpha_{(j+1)^-}$. Assume that using these values, we can construct regions given by
\begin{equation*}
    \Omega_{\alpha_i}=\{(\alpha,\beta)\in P\ |\ \alpha_i<\alpha<\alpha_0,\ \beta_0<\beta<\alpha_i \},
\end{equation*}
such that \eqref{eq:crossover_condition} holds but does not hold for regions given by
\begin{align*}
    \Omega_{\alpha_{i^-}}&=\{(\alpha,\beta)\in P\ |\ \alpha_{i^-}<\alpha<\alpha_0,\ \beta_0<\beta<\alpha_{i^-} \},\\
    \Omega_{\alpha_{i^+}}&=\{(\alpha,\beta)\in P\ |\ \alpha_{i^+}<\alpha<\alpha_0,\ \beta_0<\beta<\alpha_{i^+} \}
\end{align*}
for every $i\in \left\{ 1,\dots,m \right\}$. It follows immediately from Proposition \ref{prop:multiloop_preisach_operator} that a Preisach operator with this weighting function is a {\em multi-loop hysteresis operator}. Moreover, it can be checked that in this case the hysteresis loop $\mathcal{H}_{u,y}$ obtained from such Preisach operator with an input whose maximum is $u_{\max} = \alpha_0$ and minimum is $u_{\min}=\beta_0$ will be composed of $m+1$ subloops. 
The final example of this section illustrates a Preisach operator with a weighting function that has a complex distribution of positive and negative domains and whose hysteresis loops has four subloops.


\begin{example}\label{ex:multiloop_butterfly}
    Consider a subset of Preisach domain $P_1:=\left\{ (\alpha,\beta)\ |\ -1<\beta<1,\ \beta<\alpha<1 \right\}$ and a weighting function defined by
    \begin{equation}\label{eq:example_multiloop_mu}
        \begin{aligned}
            &\mu(\alpha,\beta) := \\[0.1cm]
            &\left\{\begin{array}{c@{\hspace{0.25cm}}l}
                \sin\left( 2\pi \left( \alpha-\beta \right) \right) + \sin\left( 2\pi \left( \alpha+\beta \right) \right), & \text{if } (\alpha,\beta) \in P_1, \\
                0, & \text{otherwise}.
            \end{array}\right.
        \end{aligned}
    \end{equation}
    
    \begin{figure}[ht]
        \centering
        \includegraphics[width=\columnwidth,trim={0 0 0 0},clip]{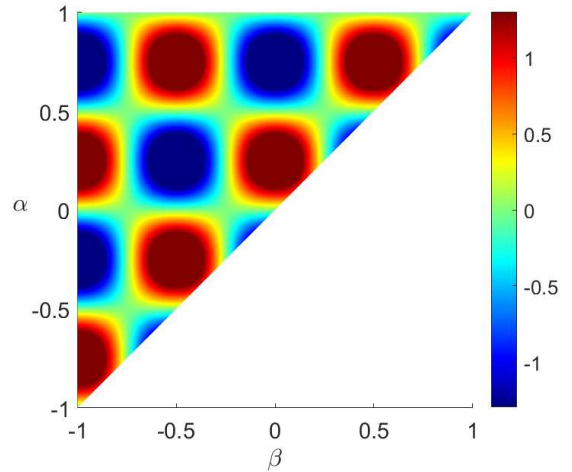}
        \caption{Weighting function $\mu(\alpha,\beta)$ defined in \eqref{eq:example_multiloop_mu} corresponding to a {\em Preisach multi-loop operator}. \label{fig:example_multiloop_mu}}
    \end{figure}
    
    The weighting function $\mu$ defined in \eqref{eq:example_multiloop_mu} is illustrated in Fig \ref{fig:example_multiloop_mu}. For this weighting function there exist three non-empty regions $\Omega_{\alpha_1}$, $\Omega_{\alpha_2}$ and $\Omega_{\alpha_3}$ that satisfy \eqref{eq:crossover_condition} and which are given by
    \begin{equation}\label{eq:example_multiloop_omegac}
        \begin{aligned}
            \Omega_{\alpha_1} &= \{(\alpha,\beta)\in P_1\ |\ -0.5<\alpha<1,\ -1<\beta<-0.5 \}, \\
            \Omega_{\alpha_2} &= \{(\alpha,\beta)\in P_1\ |\ 0<\alpha<1,\ -1<\beta<0 \}, \\
            \Omega_{\alpha_3} &= \{(\alpha,\beta)\in P_1\ |\ 0.5<\alpha<1,\ -1<\beta<0.5 \}. 
        \end{aligned}
    \end{equation}
    Moreover, it can be verified that \eqref{eq:crossover_condition} does not hold for every of next the regions
    \begin{equation*}
        \begin{aligned}
            \Omega_{\alpha_{1^-}} &= \{(\alpha,\beta)\in P_1\, |\, -0.75<\alpha<1, -1<\beta<-0.75 \}, \\
            \Omega_{\alpha_{1^+}} &= \{(\alpha,\beta)\in P_1\, |\, -0.25<\alpha<1, -1<\beta<-0.25 \}, \\
            \Omega_{\alpha_{2^-}} &= \{(\alpha,\beta)\in P_1\, |\, -0.25<\alpha<1, -1<\beta<-0.25 \}, \\
            \Omega_{\alpha_{2^+}} &= \{(\alpha,\beta)\in P_1\, |\,  0.25<\alpha<1, -1<\beta<0.25 \}, \\
            \Omega_{\alpha_{3^-}} &= \{(\alpha,\beta)\in P_1\, |\,  0.25<\alpha<1, -1<\beta<0.25 \}, \\
            \Omega_{\alpha_{3^+}} &= \{(\alpha,\beta)\in P_1\, |\,  0.75<\alpha<1, -1<\beta<0.75 \}.
        \end{aligned}
    \end{equation*}
    
    Fig. \ref{fig:example_multiloop_omegac} illustrates the three regions $\Omega_{\alpha_1}$, $\Omega_{\alpha_2}$ and $\Omega_{\alpha_3}$ with a dashed line and Fig. \ref{fig:example_multiloop_phaseplot} shows the input-output phase plot of the Preisach operator with the weighting function \eqref{eq:example_multiloop_mu} with a periodic input whose maximum and minimum are $u_{\min}=-1$ and $u_{\max}=1$. It can be verified that the hysteresis loop is composed of four subloops and that there exist three crossover points additional to the trivial ones corresponding to the maximum and minimum of the input.
    
    \begin{figure}[ht]
        \centering
        \subfigure[Input signal]{\includegraphics[width=0.23\textwidth,trim={0 0 0 0.2cm},clip]{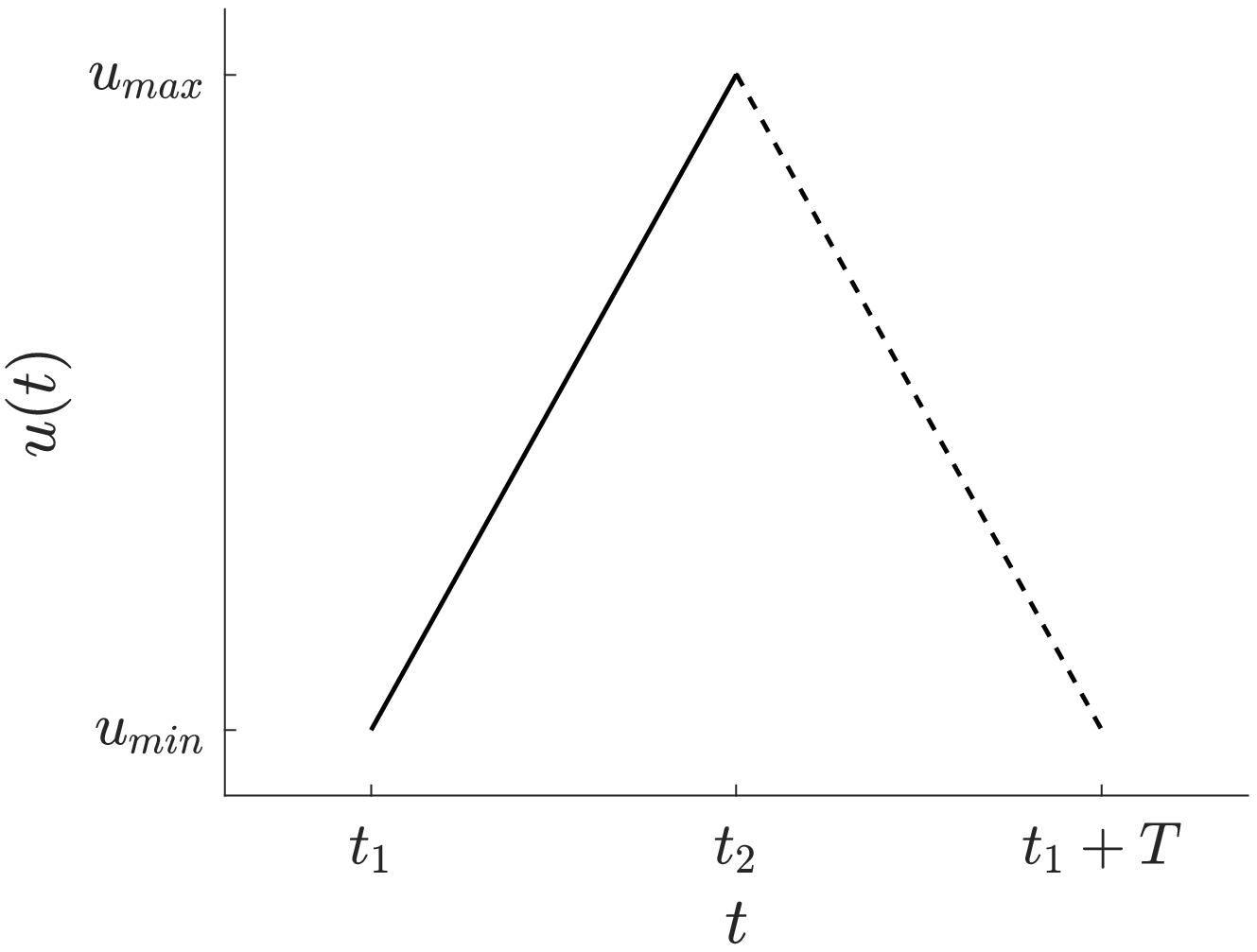}}
        \subfigure[Output signal]{\includegraphics[width=0.23\textwidth,trim={0 0 0 0.2cm},clip]{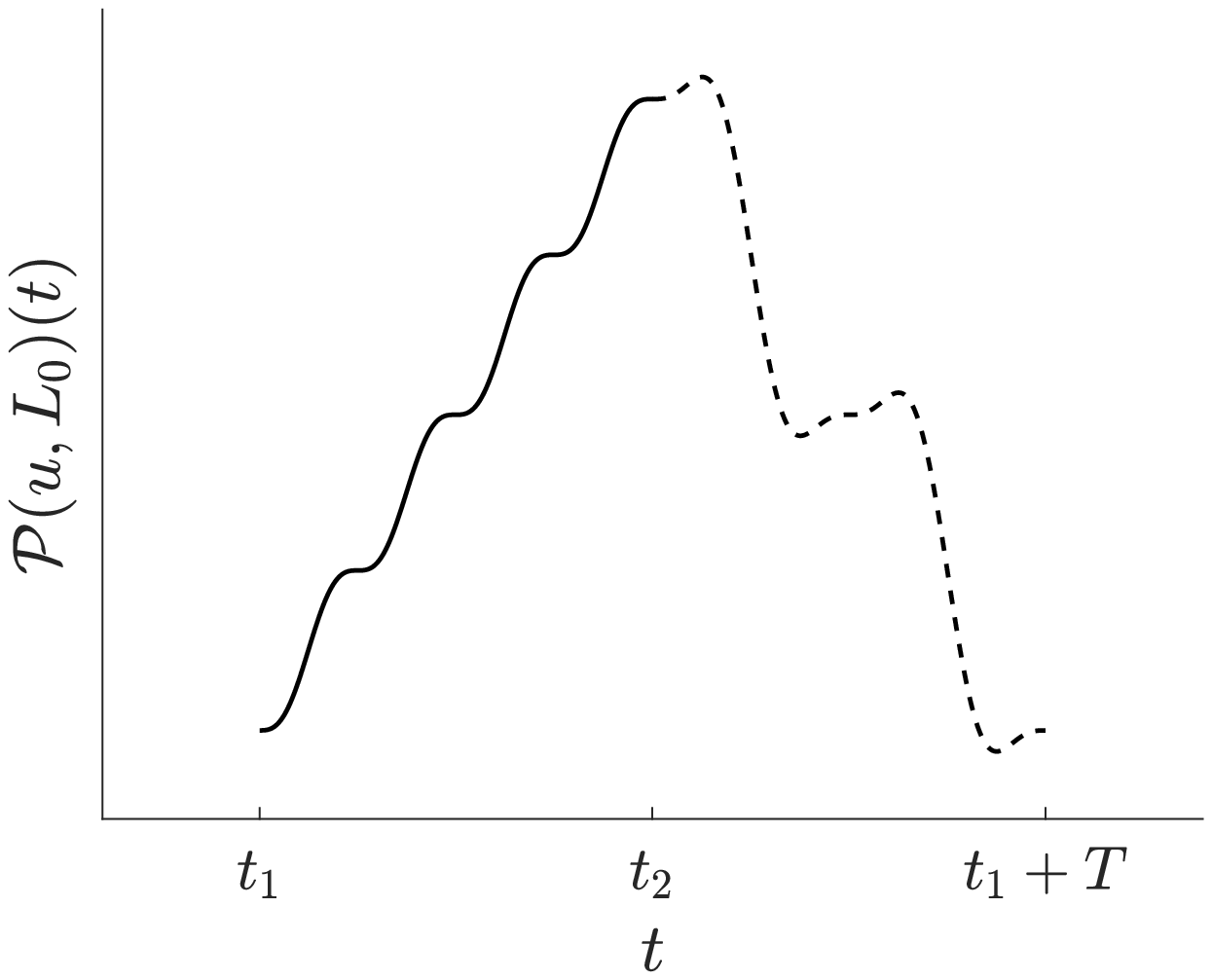}}
        \subfigure[Input-output phase plot]{\includegraphics[width=0.35\textwidth,trim={0 0 0 0.2cm},clip]{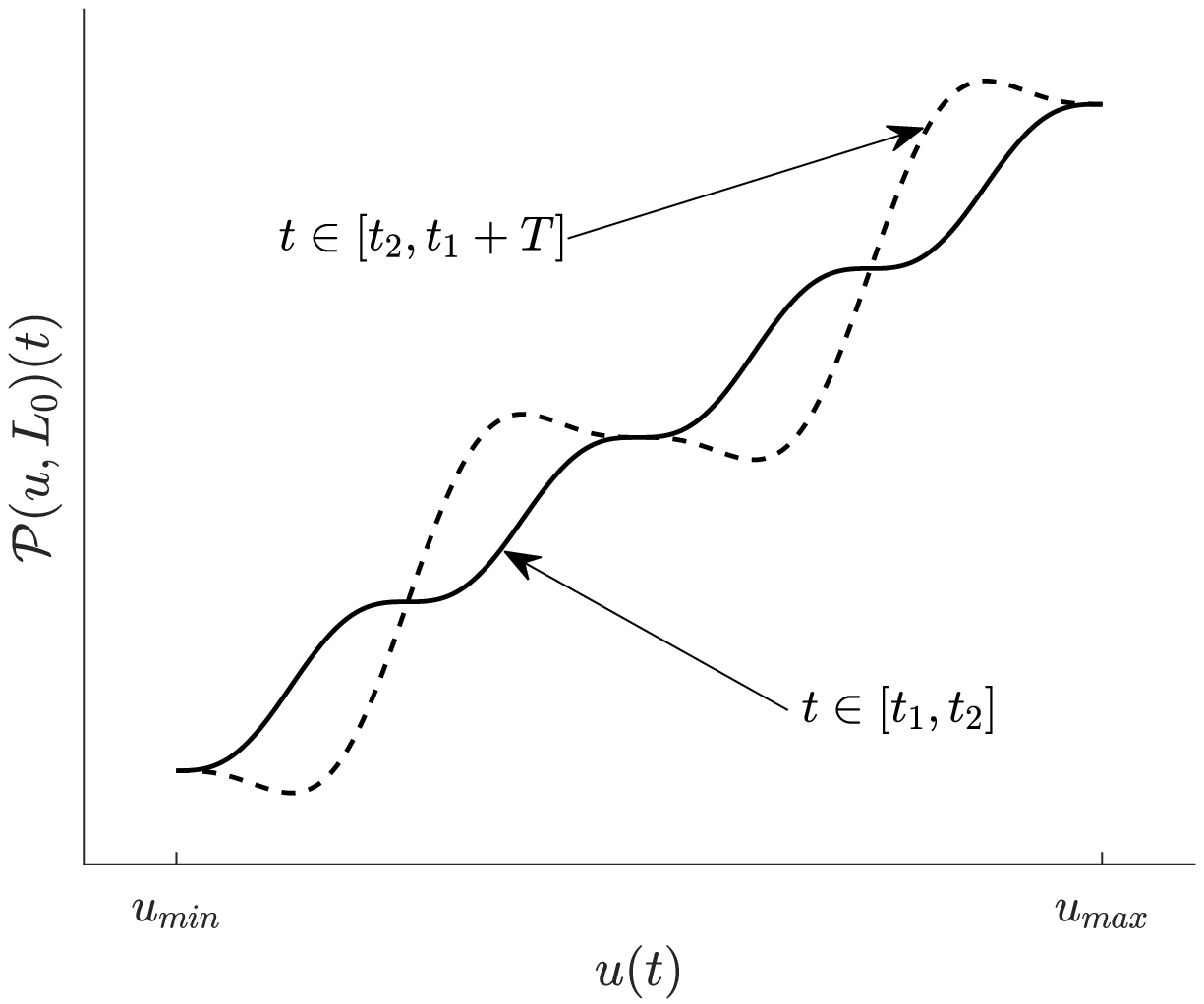}}
        \caption{Input-output hysteresis response using the multiple loops Preisach multi-loop operator. (a). The plot of input signal $u$ in a periodic time interval $[t_1,t_1+T]$. (b). The corresponding plot of output signal $y$. (c). The input-output phase plot of input and output signal which shows the hysteresis loop with more than two subloops in different orientations.}
        \label{fig:example_multiloop_phaseplot}
    \end{figure}
    
    \begin{figure}[ht]
        \centering
        \subfigure[]{\includegraphics[width=0.23\textwidth,trim={0 0 0 0.5cm},clip]{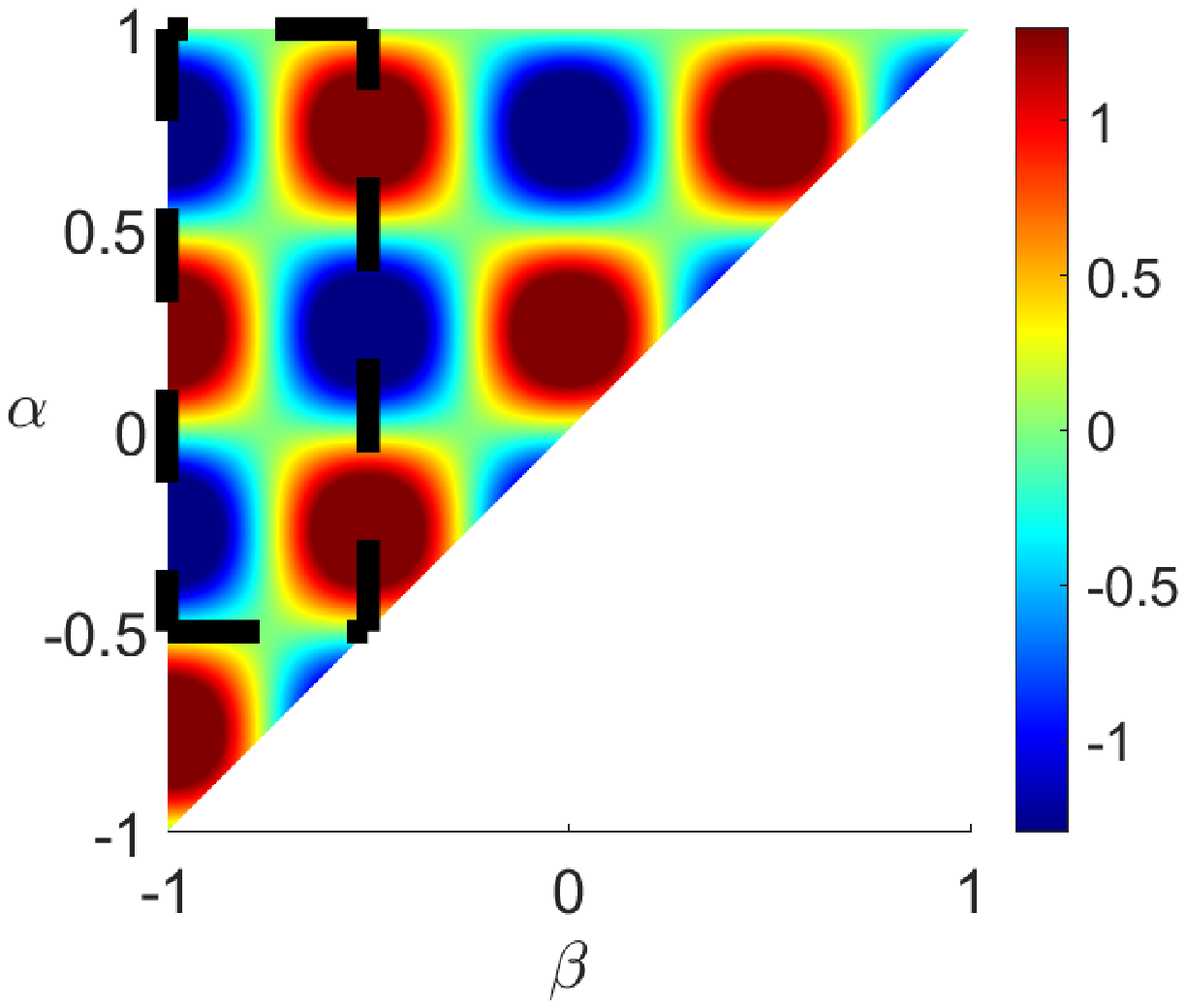}}
        \subfigure[]{\includegraphics[width=0.23\textwidth,trim={0 0 0 0.5cm},clip]{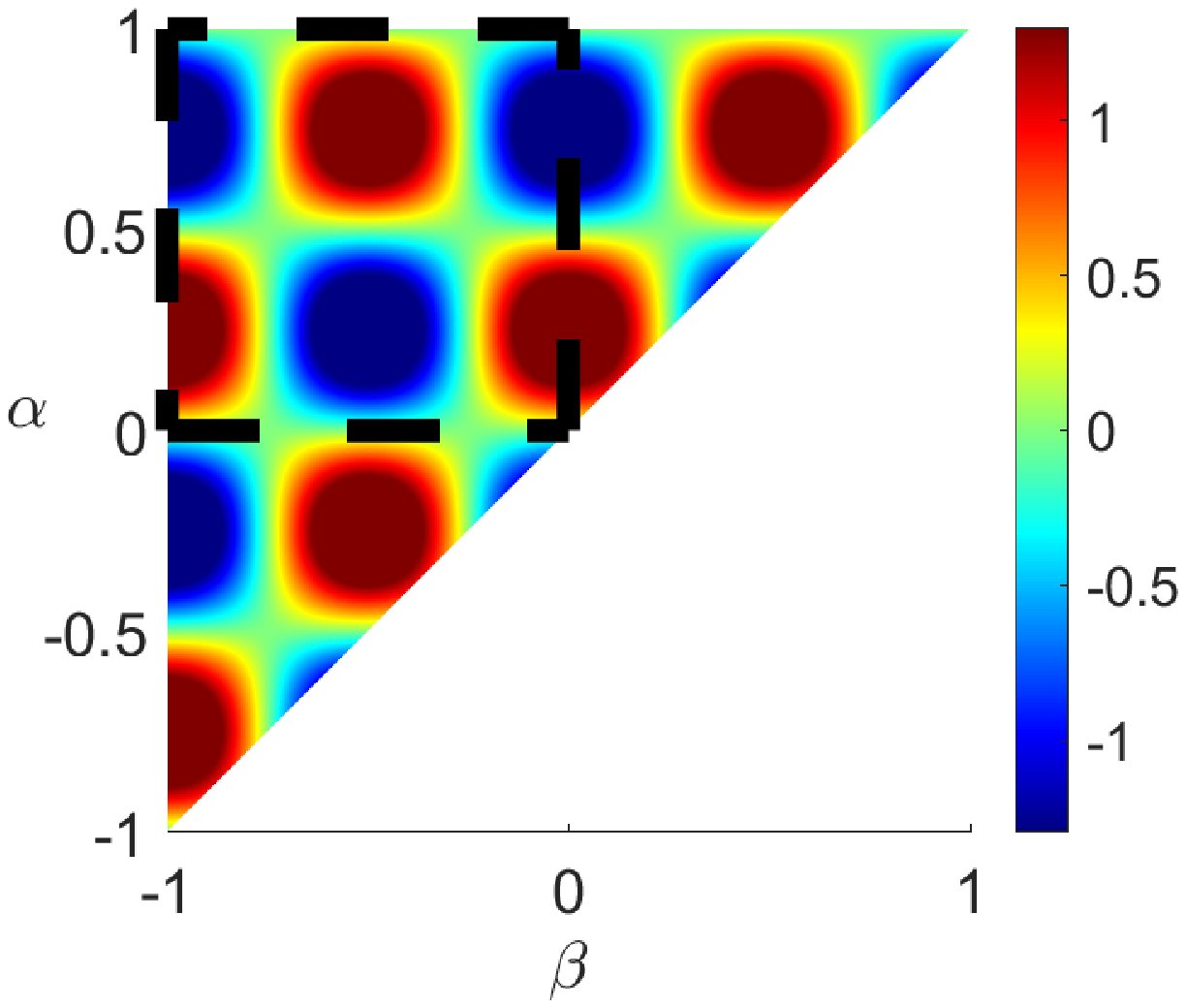}}
        \subfigure[]{\includegraphics[width=0.23\textwidth,trim={0 0 0 0.5cm},clip]{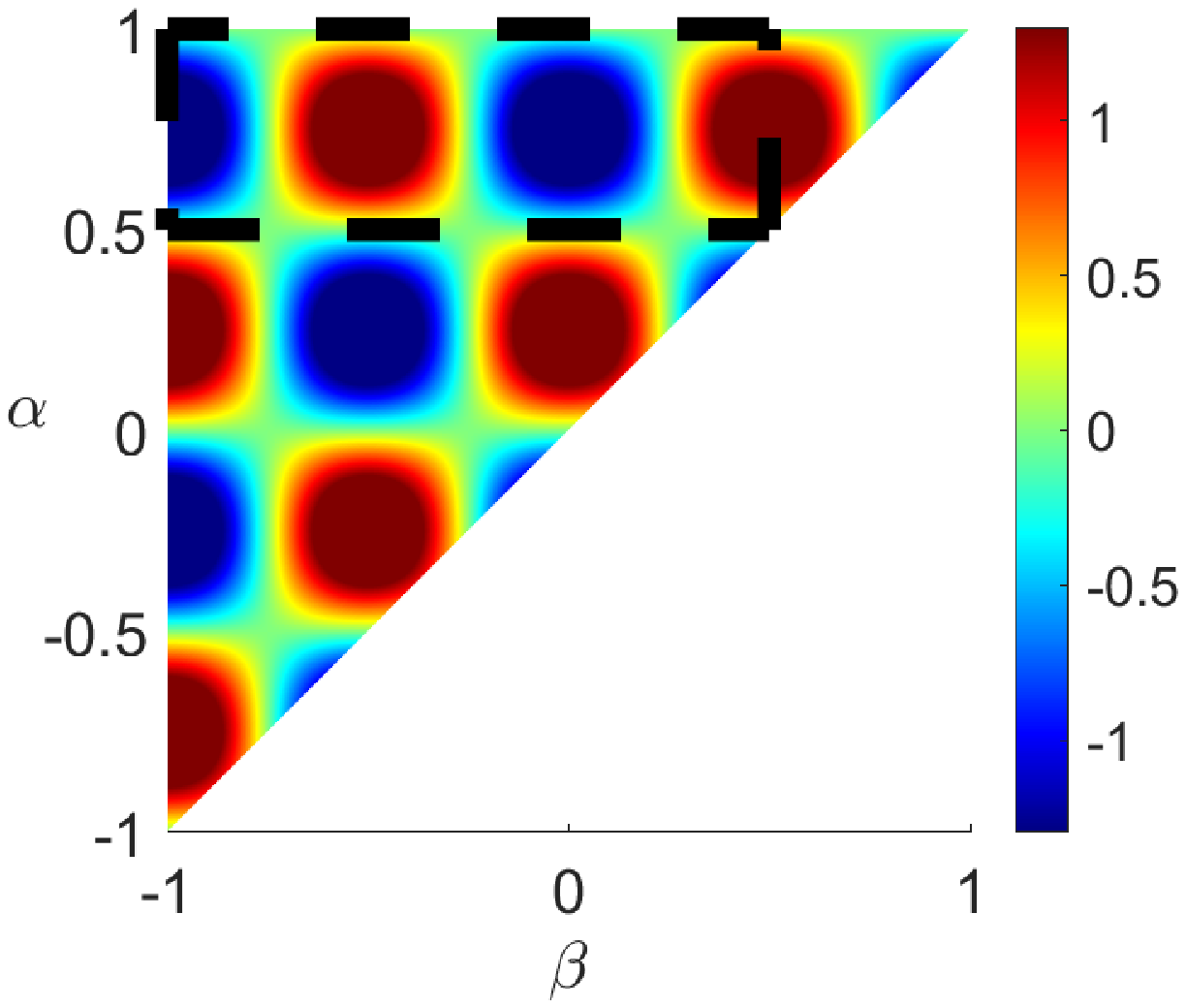}}
        \caption{Weighting function $\mu(\alpha,\beta)$ defined in \eqref{eq:example_multiloop_mu} corresponding to a {\em Preisach multi-loop operator}. The regions defined in \eqref{eq:example_multiloop_omegac} that satisfy \eqref{eq:crossover_condition} are indicated by the dashed line: $\Omega_{\alpha_1}$ in (a), $\Omega_{\alpha_2}$ in (b), and $\Omega_{\alpha_3}$ in (c). \label{fig:example_multiloop_omegac}}
    \end{figure}
\end{example}

\section{Set stability of a Lur'e system with a Preisach multi-loop operator in the feedback loop}\label{sec:lure_multiloop}
In this section, we present a brief study of the stability of a Lur'e-type system where the nonlinearity in the feedback loop is described by a Preisach multi-loop operator. We based our analysis on the results introduced in \cite{Vasquez-Beltran2020} where a bounded relation between the input and output rate of the Preisach operator has been found. Let us consider a Lur'e system which is described by 
\begin{equation} \label{eq:lure}
    \begin{aligned}    
        \Sigma_1:  &\begin{aligned} 
            \dot{x}(t) &= Ax(t) + Bw(t), \quad x(0) = x_0, \\
            z(t) &= Cx(t) \\
        \end{aligned} \\
        \Sigma_2:  &y(t) = \big( \mathcal{P}(u,L_0) \big) (t), \quad L_0\in\mathcal I, \\ 
        &\qquad\qquad\qquad\qquad\text{ with } w(t)=-y(t),\ u(t)=z(t),
    \end{aligned}
\end{equation}
where $\Sigma_1$ is a linear system, $x(t)\in\R^n$, $z(t),v(t),y(t)\in\R$ and $A,B,C$ are the system's matrices with suitable dimension, and transfer function of $\Sigma_1$ is given by $G(s) = C(sI-A)^{-1}B$. Additionally, $\Sigma_2$ is a Preisach multi-loop operator. This Lur'e system has a set of equilibria given by
\begin{equation*}
    \mathcal{E} = \{ (\bar{x},\,\bar{L}) \in \R^n \times \mathcal{I}\ |\ A\bar{x} - B\mathcal{P}( C\bar{x},\bar{L} ) = 0 \}.
\end{equation*}
Following from \cite[Proposition 3.2 \& 3.3]{Vasquez-Beltran2020}, when the weighting function $\mu$ of the Preisach operator is compactly supported, the relation between the input and output rate can be expressed by
\begin{equation*}
    \dot{y}(t) = \psi(t) \dot{u}(t), \qquad \text{a.a. } t\in \R_+,
\end{equation*}
where $\lambda_m\leq\psi(t)\leq\lambda_M$ with $\lambda_m$ and $\lambda_M$ given by
\begin{equation*}
    {\small \begin{aligned}\label{eq:psiMin}
        \lambda_m &= 2\ \min
            \left\{
            \inf_{(\gamma, \kappa) \in P}
            \displaystyle\int\displaylimits_{\kappa}^{\gamma} \mu(\gamma, \beta) \dd\beta
            ,\ 
            \inf_{(\gamma, \kappa) \in P}
            \displaystyle\int\displaylimits_{\kappa}^{\gamma} \mu(\alpha, \kappa) \dd\alpha \right\}, \\
        \lambda_M &= 2\ \max
                \left\{
                \sup_{(\gamma, \kappa) \in P}
                \displaystyle\int\displaylimits_{\kappa}^{\gamma} \mu(\gamma, \beta) \dd\beta
                ,\ 
                \sup_{(\gamma, \kappa) \in P}
                \displaystyle\int\displaylimits_{\kappa}^{\gamma} \mu(\alpha, \kappa) \dd\alpha \right\}.
    \end{aligned}}
\end{equation*}
We refer the interested readers to \cite{Vasquez-Beltran2020} for the details and proofs of these claims.
We state the next corollary which follows directly from \cite[Proposition 4.1]{Vasquez-Beltran2020}.

\begin{corollary}\label{coro:lure_stable}
    Let $\mathcal{P}$ be the Preisach multi-loop operator with a compactly supported $\mu$. Assume that $(A,C)$ is observable and $(A,B)$ is controllable. 
    Assume 
    that $\overline{G}\left(j \omega\right)$ given by
    \begin{equation*}\label{eq:Gbar}
        \overline{G}\left(j \omega\right) := \left(1 + \lambda_M G(j\omega)\right) \left(1 + \lambda_m G(j\omega)\right)^{-1},
    \end{equation*}
    is strictly positive real with $\lambda_M>0$ and $\lambda_m<0$ being the upper and lower bound of $\psi(t)$. Then $(x(t),L_t) \to \mathcal E$ as $t\to \infty$. 
\end{corollary}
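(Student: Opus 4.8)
The plan is to obtain the corollary as a direct specialization of \cite[Proposition 4.1]{Vasquez-Beltran2020}, so the real work is to verify that the compact support of $\mu$ together with the frequency-domain hypothesis supply exactly the ingredients that proposition requires. The starting point is the rate characterization recalled above: since $\mu$ is compactly supported, \cite[Proposition 3.2 \& 3.3]{Vasquez-Beltran2020} give $\dot y(t) = \psi(t)\dot u(t)$ with $\lambda_m \le \psi(t) \le \lambda_M$ for almost all $t \in \R_+$. This is the crucial reduction: although $\Sigma_2$ is a hysteresis operator with memory, its differentiated input-output behavior obeys a time-varying sector condition with sector $[\lambda_m,\lambda_M]$, and because $\lambda_m < 0 < \lambda_M$ this sector contains the origin, which is consistent with the mixed orientation of the sub-loops of a multi-loop operator.

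Next I would differentiate the interconnection \eqref{eq:lure}. With $u = z = Cx$ and $w = -y$ one has $\dot u(t) = C\dot x(t)$ and $\ddot x(t) = A\dot x(t) - B\dot y(t) = A\dot x(t) - B\,\psi(t)\,C\dot x(t)$. Writing $\xi := \dot x$, the variable $\xi$ therefore solves the Lur'e system $\dot\xi = A\xi + B\tilde w$, $\tilde z = C\xi$, $\tilde w = -\psi\,\tilde z$, driven by the same transfer function $G(s) = C(sI-A)^{-1}B$ but now with a genuine time-varying sector nonlinearity $\psi \in [\lambda_m,\lambda_M]$. The hypothesis that $\overline G(j\omega) = (1+\lambda_M G(j\omega))(1+\lambda_m G(j\omega))^{-1}$ is strictly positive real is precisely the circle-criterion condition for this sector, obtained through the standard loop transformation that maps $[\lambda_m,\lambda_M]$ to the positive-real axis. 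Combined with controllability of $(A,B)$ and observability of $(A,C)$, the Kalman-Yakubovich-Popov lemma then furnishes $P = P^\top > 0$ and a quadratic Lyapunov function $\xi^\top P \xi$ whose dissipation inequality renders $\xi = 0$ globally asymptotically stable for the differentiated dynamics; hence $\dot x(t) = \xi(t) \to 0$ as $t \to \infty$.

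Finally I would translate $\dot x \to 0$ into convergence of the full state to the equilibrium set. Boundedness of trajectories together with $\dot x \to 0$ forces $\dot y \to 0$, so the interface $L_t$ ceases to move in the limit and the residual dynamics satisfy $Ax - B\,\mathcal P(Cx,L) \to 0$, i.e. $(x(t),L_t)$ approaches $\mathcal E$. I expect this last step to be the main obstacle, and it is the substance of \cite[Proposition 4.1]{Vasquez-Beltran2020}: the admissible state is the pair $(x,L_t) \in \R^n \times \mathcal I$, and the interface component does not evolve by an ordinary differential equation, so a LaSalle-type invariance argument must be carried out on a partly infinite-dimensional state space and one must rule out that the hysteretic memory prevents the limit set from lying in $\mathcal E$. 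Since the present statement is a corollary, my task reduces to checking that compact support yields the sector bounds $\lambda_m,\lambda_M$ and that the stated strictly-positive-real condition is the frequency inequality invoked in that proposition, after which the conclusion $(x(t),L_t) \to \mathcal E$ follows directly.
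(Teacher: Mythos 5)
Your proposal is correct and takes the same route as the paper, which in fact offers no proof at all beyond the remark that the corollary ``follows directly from [Proposition~4.1]'' of the cited reference, after recalling the rate relation $\dot y = \psi \dot u$ with $\lambda_m \le \psi \le \lambda_M$ from that same work. Your reconstruction of the internal mechanism (loop transformation, circle-criterion/KYP argument on the differentiated dynamics, and the LaSalle-type step needed to pass from $\dot x \to 0$ to convergence of $(x(t),L_t)$ to $\mathcal E$) is a plausible and correctly organized account of what the cited proposition must contain, and you rightly identify that the only work left for the corollary itself is matching hypotheses.
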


\begin{example}
    Consider a Lur'e system as defined in \eqref{eq:lure} whose linear system matrices are given by
    \begin{equation*}
        \begin{aligned}
            A &= \left[\begin{matrix}
                  0 &   1 &  0 \\
                  0 &   0 &  1 \\
                -26 & -28 & -3
            \end{matrix}\right], & B &= \left[\begin{matrix} 0 \\ 0 \\ -26 \end{matrix}\right], &
            C &= \left[\begin{matrix} 1 & 0 & 0 \end{matrix}\right].
        \end{aligned}
    \end{equation*}
    Let the Preisach multi-loop operator $\mathcal{P}$ in this Lur'e system have the weighting function defined by \eqref{eq:example_multiloop_mu} in Example \ref{ex:preisach_multiloop}. It can be checked that $\lambda_M=\frac{4}{\pi}$ and $\lambda_m=-\frac{1}{2\pi}$. Moreover, it can be checked that conditions of Corollary \ref{coro:lure_stable} are satisfied. The results of a simulation of this Lur'e system with initial conditions of the linear system given by $x_0 = \left[ 0.8, -1.0, -1.0 \right]^\top$ and initial interface for the Preisach multi-loop operator given by $L_0=\{(\alpha,\beta)\in P\ |\ 0<\alpha<1,\, \beta=-0.9 \}\cup\{(\alpha,\beta)\in P\ |\ \alpha=1,\, \beta<-0.9 \}\cup\{(\alpha,\beta)\in P\ |\ \alpha=0,\, -0.9<\beta \}$ is illustrated in Fig. \ref{fig:example_stability_sim}.

    \begin{figure}[ht]
        \centering
        \subfigure[]{\includegraphics[width=0.35\textwidth,trim={0 0 0 0.5cm},clip]{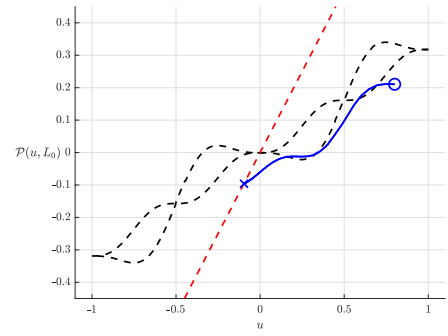}}
        \subfigure[]{\includegraphics[width=0.35\textwidth,trim={0 0 0 0.5cm},clip]{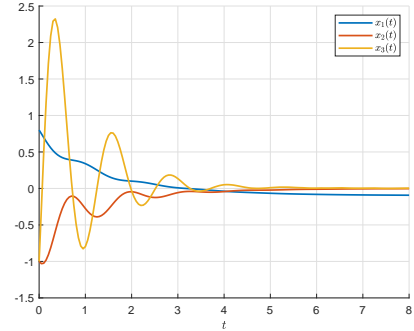}}
        \caption{Results of a simulation of a Lur'e system whose nonlinearity is the Preisach multi-loop operator with a weighting function defined as in \eqref{eq:example_multiloop_mu}. (a) Input-output phase plot. The black dashed line shows the major hysteresis loop, the red dashed line indicates the input-output pairs which correspond to states $(\bar{x},\bar{L})\in\mathcal{E}$, and the simulation is indicated by the blue line where the initial input-output $(y(0),u(0))$ is indicated by the circle and the final input-output $(y(t_f),u(t_f))$ is marked by the cross. (b) Linear system states.}
        \label{fig:example_stability_sim}
    \end{figure}
\end{example}


\section{Conclusion}\label{sec:conclusions}

In this paper we have introduced the concepts of {\em butterfly hysteresis operator} based on the characterization of the enclosed signed-area of its hysteresis loops and {\em multi-loop hysteresis operator} based on the self-intersections of its hysteresis loops. We have studied the Preisach operator and provided conditions over its weighting function such that a butterfly or a multi-loop hysteresis operator can be obtained. Moreover, we analyzed the classical problem of a Lur'e system using a Preisach multi-loop as feedback loop.



\bibliographystyle{IEEEtran}
\bibliography{bibliography}

\end{document}